%% file: main.tex
\documentclass[runningheads]{llncs}
\usepackage[T1]{fontenc}

\usepackage{tmpl}
\usepackage[natbib=true]{biblatex}
\usepackage{etoolbox}
\AtEndEnvironment{proof}{\hfill$\square$}

\begin{document}
\title{Price of Anarchy of Algorithmic Monoculture}
\author{Robert Kleinberg \and
Erald Sinanaj \and
Éva Tardos}
\authorrunning{R. Kleinberg, E. Sinanaj and É. Tardos}
\institute{Cornell University, Ithaca NY, USA\\
\email{rdk@cs.cornell.edu, erald@cs.cornell.edu, eva.tardos@cornell.edu}}
\maketitle              
\begin{abstract}
Several recent works investigate the effects of \emph{monoculture}, the ever increasing phenomenon of (possibly) self-interested actors in a society relying on one common source of advice for decision making, with an archetypal driving example being the growing adoption and predictive power of machine learning models in matching markets, e.g. in hiring. Kleinberg and Raghavan (PNAS, 2021) introduced a model that captures the effects of monoculture in a one-sided matching market with advice, demonstrating that a higher accuracy common signal (such as an algorithmic vendor) might incentivize society as a whole to rationally adopt it, but as a collective it would be better off if each instead adopted less accurate, but private advice. We generalize their model and address the open question of their work in quantifying the \emph{social welfare loss}. We find that monoculture and more generally decentralized optimization is close to optimal: we show a tight constant bound of 2 on the price of anarchy (and more general notions) for the induced game.
\end{abstract}

\input{1intro}
\input{2related}
\input{3prelim}

\input{4mainResults}
\input{5assumptions}

\input{6lowerBounds}
\input{7conclusion}
\subsubsection{\ackname} Erald Sinanaj was supported in part by AFOSR grant FA9550-23-1-0410. Éva Tardos was supported in part by AFOSR grants FA9550-23-1-0410 and FA9550-231-0068, and ONR MURI grant N000142412742.
\printbibliography
\input{8apdx}

\begin{credits}
\end{credits}
\end{document}

%% file: 1intro.tex
\section{Introduction}

Algorithmic monoculture describes the now prevalent pattern of collective reliance on algorithmically curated advice. 
While the predictive power of algorithms and machine learning models usually outclasses traditional means, convergence and over-dependence can lead to detrimental effects in social welfare \cite{monoculture_kleinberg_raghavan,monoculture_matching_peng_garg,improved_bayes_risk_jagadeesan} and fairness concerns over outcome homogenization \cite{homegenization_analysis_toups,outcome_hom_bommasani,stat_feedback_loop_baek}.

Our work considers a one-sided matching market under limited advice and specifically a model that is directly influenced by the one introduced by Kleinberg and Raghavan in \cite{monoculture_kleinberg_raghavan}, that kick-started the study of the effects of algorithmic monoculture. 
The motivation stemmed from the setting of \emph{algorithmic hiring}: in their model a number of firms each seek to hire one candidate out of a pool of many, sharing a common ground truth valuation. 
However, they do not know \emph{a priori} this valuation (or future performance). 
To the aid of their decisions, they have access to a public algorithmic vendor, that relays a single shared noisy ranking to any who choose to view it, and a private human advisor (employee) that also produces a ranking for their corresponding firm, independently of any other. 
They model both of them as ranking distributions of the same family, parameterized by an accuracy term; the algorithm is more accurate than a human advisor. 
Each firm chooses either the algorithm or the human for advice, and they proceed in random order to hire the top remaining candidate in the ranking they received. 

The theme of \cite{monoculture_kleinberg_raghavan} is, as they state, a counterintuitive fact: the choice of the algorithm might be a dominant strategy for each firm, but welfare and thus average utility might be better if every firm instead used the less accurate human evaluator. 
They show this to be the case for two firms, under some natural assumptions, and experimentally verify that this phenomenon can arise for multiple firms as well. 
However, this surprising qualitative conclusion does not shed light to the degree of the negative effect that monoculture impacts. 
Our work is devoted to providing a measure to this welfare loss, resolving their open question with a rather reassuring answer: monoculture (and more generally selfish behavior) is not that bad. 
We show that the \emph{price of anarchy} for this game is 2 (which is asymptotically tight), for \emph{any} number of agents and \emph{any} number of ``advice sources'', thus bounding the welfare loss even when there is no dominant strategy (and we also extend this for more general notions of equilibria). 

We show this to be the case for a generalization of their model, substantially relaxing the assumptions about the structure of the advice acquisition and the relevant ranking distributions, which allows us to apply our findings to a much broader landscape, capturing a wealth of realistic scenarios. 
The properties we seek for the advice sources are attained by a very rich class of distributions, which we term \emph{Schur-concave} rankings; as an example, any ranking produced by additive i.i.d. noise falls under this class, whenever the noise distribution is \emph{log-concave} (with two examples being the Gaussian and Laplacian, considered in \cite{monoculture_kleinberg_raghavan}). 
We also show that the quite well known Mallow's Model rankings enjoy the same properties that lead to our bounds. 
As previously mentioned, our results do not assume the existence of a dominant strategy and allow for arbitrary asymmetry between the firms, specifically any number and combination of unique idiosyncratic and `common' signals. 
Furthermore, we extend our results to a more realistic mechanism, that allows for more expressive strategic behavior and we show that under minimal assumptions a firm does not have to \emph{strategize} any differently from the previous model. In addition, our main results assume a condition which roughly states that for any two candidates it is more likely that their relative order is correct, rather than not. For environments that are \emph{systemically} biased, in that advice sources may violate this property, we provide price of anarchy bounds that scale gracefully with the degree of violation of this condition.

The paper is structured as follows. 
In the immediately succeeding section we discuss related work. In \Cref{sec:3prelim} we formalize specifics of our problem. 
In \Cref{sec:4poa} we show our main result; a constant price of anarchy. 
In \Cref{sec:5assumptions} we discuss the generality of our assumptions and the strategic extension. 
In \Cref{sec:lowerBounds} we establish that our price of anarchy bound is tight and show that if the natural assumption we impose on the rankings is arbitrarily violated, the price of anarchy becomes unbounded. 
Finally, in \Cref{sec:relaxation-of-sc} we introduce the relaxation of 
our condition on rankings, and go on to provide a price of anarchy bound which degrades in a controlled manner, as a function 
of the extent to which the original assumption is violated.

%% file: 2related.tex
\section{Related Work}
\label{sec:2related}
A number of recent papers explore the theme of \emph{algorithmic monoculture}, addressing various questions about the impacts of substituting algorithmically-derived rankings and recommendations for human preferences in strategic settings. 
These works typically model the algorithmically-derived  preferences as more accurate (which tends to improve social welfare) but less varied (which tends to reduce social welfare), leading to subtle questions about which of these effects predominates. 
This phenomenon was explored in the setting of one-sided matching by \citet{monoculture_kleinberg_raghavan}, in two-sided matching markets by \citet{monoculture_matching_peng_garg}, and in the setting of competing classifiers by \citet{improved_bayes_risk_jagadeesan}. 
While each of the aforementioned papers introduces examples of settings in which algorithmic monoculture leads to reduced social welfare, to the best of our knowledge the present work is the first to bound the worst-case social welfare loss due to algorithmic monoculture. 
We establish such bounds in a one-sided matching model directly inspired by~\cite{monoculture_kleinberg_raghavan}.

The model of one-sided matching we investigate assumes that the mechanism used is Random Serial Dictatorship (also known as Random Priority). While the price of anarchy of one-sided matching mechanisms, including Random Serial Dictatorship, has been investigated in several prior works, the model we study differs in two key respects. 
First, it is a common value environment: any candidate has the same value to all agents. Consequently, the social welfare of a matching depends only on the set of candidates matched, rendering price-of-anarchy questions vacuous in the special case when  candidates and agents are equinumerous. 
Second, most importantly, the agents in our model are uncertain of the candidates' values. 
In fact, the main strategic decision they face is not which candidate to select, but which \emph{signal} to obtain in order to advise their ranking of the candidates.

In the classical model of matching when firms have private values for the candidates, for markets with an equal number of agents and candidates, under a standard unit-sum normalization assumption on agents' values, \citet{welfare_onesided_christodoulou} obtained a tight bound of $\Theta(\sqrt{n})$ on the price of anarchy of both Random Serial Dictatorship and another widely-studied mechanism, the Probabilistic Serial mechanism of \citet{probabilistic_serial_bogomolnaia}. 
Other earlier works had focused on welfare guarantees for truthful mechanisms \citep{efficiency_adamczyk,welfare_onesided_RSD_filosRatsikas} and some investigated novel measures of welfare \citep{welfare_onesided_without_money_bhalgat}. \citet{poa_asymmetric_probabilistic_serial} analyzed the price of anarchy of one-sided matching mechanisms in asymmetric markets where candidates may outnumber agents. 
In particular, they showed that the price of anarchy in markets with $n$ agents can be asymptotically greater than $\sqrt{n}$ when the number of candidates is sufficiently large compared to $n$.

%% file: 3prelim.tex
\section{{Setup}}
\label{sec:3prelim}
{\bf Game specification}
There are $n$ firms (agents) with a shared but unknown cardinal preference over $m$ candidates (items) and a unit demand over them. Any candidate awards a firm some non-negative utility or value. 
Firms have access to \emph{ranking technologies} (advice sources). 
Each firm chooses one such technology, from those available to her and receives a ranking sample, not necessarily independent of one another firm receives (below we discuss this in further length). 
\emph{Random Serial Dictatorship} dictates the order in which each firm is called: nature draws a uniformly random ordering and each firm is called in that order. When a firm is called, she receives the top remaining candidate from her chosen ranking and this is public knowledge. 
This is the setting of \cite{monoculture_kleinberg_raghavan}, which we will mostly focus on; we will call this mechanism \emph{Obedience-constrained} Random Serial Dictatorship. 
We will also investigate a less strict mechanism, which is identical to the previous but different in one important aspect: a firm can decide freely about her choice of a candidate when it's her turn. 
We will call this mechanism \emph{Unconstrained} Random Serial Dictatorship. 
More discussion about the unconstrained model will follow in \Cref{sec:5assumptions}; unless explicitly stated, all definitions and claims will be for the former model.

{\bf Ranking technologies}
A ranking technology is a distribution on rankings of candidates (permutations of the set $[m] := \{1, \hdots, m\}$). 
There exists a \emph{common signal space} $\cal{A}$ of independent ranking technologies, such that if any two firms choose the same strategy from $\cal{A}$ they receive the same sample. 
Each firm has access to some arbitrary subset $\cal{A}_i \subset \cal{A}$ and an \emph{idiosyncratic} space of independent ranking technologies $\cal{H}_i$, which is mutually disjoint between firms, (that is $\cal{H}_i \cap \cal{H}_j = \emptyset$ for $i \neq j$). 
We note that ranking technologies are mutually independent (for any realization of the candidate values). 
The action space of the firms in line with the model of \citet{monoculture_kleinberg_raghavan} corresponds to $\cal{A}$ and $\cal{H}_i$ being singleton sets, with $\cal{A}_i = \cal{A}$ and each idiosyncratic technology being identically (but independently) distributed.
The pure strategy space is the set $\cal{S} := \times_{i \in [n]} \cal{R}_i$ (with $\cal{R}_i := \cal{A}_i\cup\cal{H}_i$). Further, we will refer to $\cal{R} := \bigcup_{i \in [n]}\cal{R}_i$ as the \emph{advice space} of the game.

{\bf Valuations, welfare and notation}
Candidates are identified with indices $i\in [m]$ and their value $x(i)$. 
The joint value vector $\mbf{x}$ is drawn from some distribution $\cal{X}$ supported on a subset of $[0, \infty)^m$.
Whenever we have a ranking technology $R$ we will usually denote with $\mbf{r} = (r_1, \hdots, r_m)$ the ranking (i.e., permutation of $[m]$) sampled from the corresponding distribution $\cal{F}_R(\mbf{x})$. 
If $C$ is some subset of $[m]$, $\mbf{r}^{-C}$ will denote the partial ranking induced by removing the candidates in $C$ from consideration. 
Thus, in line with the previously defined notation, $r^{-C}_i$ denotes the $i$-th ranked candidate in the corresponding ranking (whenever it's well defined). 
We will usually denote a pure profile by the letter $\mbf{s} \in \cal{S}$. 
The utility $u_i(\mbf{s})$ of some firm in profile $\mbf{s}$ is the expected value of the candidate she hires and the social welfare $SW(\mbf{s})$ is the usual \emph{utilitarian} one (the utility sum). 
A socially optimal profile is one that maximizes the social welfare. 
The Price of Anarchy \cite{poa_koutsoupias} is the ratio of the optimal welfare to the worst-case Nash equilibrium welfare.

%% file: 4mainResults.tex
\section{The price of anarchy}
\label{sec:4poa}
Before we state our main result we define a very natural property that we would like a ranking technology to have.

\begin{definition}[Stochastically Consistent]\label{def:sc}
    A ranking technology $R$ is \emph{stochastically consistent} iff the following holds for all indices $i < j \in [m]$ and candidates $k, \ell \in [m]$. The fact $x(k) \ge x(\ell)$ implies that:
    \[\pr[r\sim \cal{F}_R(x)]{(r_i, r_j) = (k, \ell)}[r^{-\set{k, \ell}}]\\
    \geq\\
    \pr[r\sim\cal{F}_R(x)]{(r_i, r_j) = (\ell, k)}[r^{-\set{k, \ell}}] \text{ a.s.}\]
\end{definition}
That is, the ranking obeys a notion of \emph{independence of irrelevant alternatives}: conditioned on two candidates $k$ and $\ell$ being ranked at two specific spots ($i$ and $j$) 
in any order, and conditioned on the set of candidates ranked in some other arbitrary positions, it is more probable that the ranking correctly represents the relative order of the aforementioned pair, rather than not. 
In \Cref{sec:5assumptions} we show how deceivingly broad it is; demonstrating that it captures almost any natural ranking distribution one can think of and the wealth of those studied in the literature.

We extend this definition, declaring an \emph{advice space} stochastically consistent when all its ranking technologies are stochastically consistent. 
Our main result is the following favorable conclusion about the social welfare loss of monoculture and more generally of any Nash equilibrium.
\begin{theorem}\label{thm:PoA}
   The Price of Anarchy of the Obedience-Constrained Random Serial Dictatorship mechanism, in which firms obtain their rankings from a stochastically consistent advice space, is bounded above by 2.
\end{theorem}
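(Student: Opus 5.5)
Our plan is a smoothness-style argument that compares an arbitrary Nash equilibrium $\mbf{s}$ with a socially optimal profile $\mbf{s}^*$ through unilateral deviations. Since $\mbf{s}$ is an equilibrium, $u_i(\mbf{s}) \ge u_i(s_i^*, \mbf{s}_{-i})$ for every firm $i$. It therefore suffices to prove
\[\sum_{i} u_i(s_i^*, \mbf{s}_{-i}) \;\ge\; SW(\mbf{s}^*) - SW(\mbf{s}),\]
which yields $2\,SW(\mbf{s}) \ge SW(\mbf{s}^*)$.

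To set this up, fix a realization $\mbf{x}$ of values and of all ranking technologies, together with the random order of firms. In profile $\mbf{s}$, let $C_t$ denote the set of candidates hired before the $t$-th firm's turn.

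The central lemma is a \emph{deviation-versus-availability} inequality. Take firm $i$ deviating to $s_i^*$, with everyone else fixed, and let $C$ be the set of candidates taken before $i$ is called. We want to show that the expected value $i$ receives is at least the expected value of the top remaining candidate of her ranking from $s_i^*$ in the optimal profile, minus the value ``stolen'' by others. Stochastic consistency is the tool for this. Conditioning on $\mbf{r}^{-\set{k,\ell}}$ and swapping pairs shows that the top element of $\mbf{r}^{-C}$ stochastically dominates, in value, a uniformly random choice among comparable positions. In particular, removing a set $C$ of candidates from a ranking changes the expected value of the top survivor by at most an amount chargeable to the candidates in $C$.

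The step I expect to be hardest is the charging argument itself. The deviator's ranking may be correlated with others' rankings through a shared common signal, and the set $C$ depends on the random order. I would handle both by averaging over the uniformly random order. Under $\mbf{s}^*$, each firm $i$ obtains the best candidate of her ranking among those not taken by earlier firms in $\mbf{s}^*$. In the deviation, she obtains the best candidate of her ranking among those not taken by earlier firms in $\mbf{s}$. Whenever her $\mbf{s}^*$-pick $c$ is still available in the deviation, she gets at least something ranked above $c$ in her own ranking. Stochastic consistency is what converts ``ranked above'' into ``at least as valuable in expectation'': by pairwise swaps, conditional on the identities occupying the positions, the higher position is more likely to hold the higher-valued candidate. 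When $c$ is unavailable, it was taken by some earlier firm $j$ in $\mbf{s}$, and we charge $x(c)$ to $j$'s hire.

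A key care point is that each hired candidate in $\mbf{s}$ should be charged at most once in total across deviators. Summing over $i$ then gives the displayed inequality. To guarantee this, I would use the fact that the optimal assignment is injective, so distinct deviators have distinct $\mbf{s}^*$-picks under the coupling with a fixed order. I would also compare ranks using the ``first available'' structure, making the bound $x(c) \le$ value hired by $j$ in expectation; this again relies on stochastic consistency, applied to $j$'s technology.
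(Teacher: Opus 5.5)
Your global architecture matches the paper's. Its case analysis, and the $(1,1)$-smoothness theorem in its appendix, amount to $\sum_i u_i(s_i^*,\mathbf{s}_{-i})\ge SW(\mathbf{s}^*)-SW(\mathbf{s})$. This is obtained by splitting, for each fixed order $\beta$, the optimal welfare into two parts: the picks $c=c_i^{(\beta)}(\mathbf{s}^*)$ still available to $i$ against $\mathbf{s}_{-i}$, and those already taken by $i$'s predecessors in $\mathbf{s}$. The second part is at most $SW(\mathbf{s})$ because optimal picks are distinct. This charging step needs no stochastic consistency: if $c$ was taken by $j$ in $\mathbf{s}$, then $j$'s hire \emph{is} $c$, so the charge is exact. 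The real work is the deviation lemma (\Cref{lemma:deviation}), $\mathbb{E}[x(d)-x(c)\mid \mathbf{x},\beta,c\notin C]\ge 0$ with $d$ the deviator's pick. That is where your argument has a gap.

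You derive that lemma from consistency of $i$'s own ranking alone: ``conditional on the identities occupying the positions, the higher position is more likely to hold the higher-valued candidate.'' This cannot be enough. \Cref{ex:deviation} exhibits a deviator with a stochastically consistent technology and inconsistent competitors for which the inequality fails. The obstruction is as follows. The event $d\neq c$ means $d=k$ lies above $c=\ell$ in $r$, and $k$ was removed before $i$'s turn in $\mathbf{s}^*$ while $\ell$ was not. That event is driven by the other firms' samples in $\mathbf{s}^*$. Swapping $k,\ell$ in $r$ alone does not send $\{d=k,c=\ell\}$ to the mirror event $\{d=\ell,c=k\}$; for private $r$ it yields $c=d=\ell$. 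So the two events cannot be compared through $r$.

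The missing idea is a joint swap. Exchange $k$ and $\ell$ in $r$ and in the sample of the firm that took $k$ in $\mathbf{s}^*$; this is the paper's step. To keep the whole $\mathbf{s}^*$ run intact when several firms precede $i$, also swap in every earlier sample that ranks $k$ at or above its user's pick. Each such sample ranks $k$ above $\ell$, because $\ell$ stayed available. Independence of distinct technologies plus pointwise consistency of each then shows that the mirrored configuration is no more likely when $x(k)\ge x(\ell)$. Every predecessor in $\mathbf{s}$ ranks both $k$ and $\ell$ below its own pick, so the swap leaves $C$ unchanged. That, not averaging over $\beta$, is how shared common signals are handled; the lemma holds for each fixed $\beta$ and $\mathbf{x}$. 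Your remark that consistency of ``$j$'s technology'' is needed is the right instinct. It belongs in this step, applied to the firms that took $k$ in $\mathbf{s}^*$, not in the charging step.
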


Before presenting the full proof of the theorem, we sketch the intuition behind the result. 
An initial observation is that, since agents have shared (cardinal) preferences in our model, the total utility of all agents is equal to the total value of the candidates
hired. 
The main idea of the proof is to consider two cases, depending on the total value 
of the candidates that each firm gets in the socially optimal profile $\mbf{s}^*$, but would have been ``taken'' from them in the equilibrium profile $\mbf{s}$ by the firms that went before her. The value of this set of candidates will be denoted in the proof by $W^{(\beta)}_{\text{snatched}}(\mbf{s}^*, \mbf{s})$, where $\beta$ is the order of the firms. 
If its expectation 
is at least half that of the optimum social welfare, then clearly the price of anarchy is bounded above by two. 
Otherwise, at least half of the expected total value of the candidates hired in the socially optimal profile must be attributable to candidates that would not be ``taken'' by those before the respective firm in the equilibrium profile.


Thus, in the second case, if a firm at random in the equilibrium had unilaterally deviated to her strategy in the optimal profile, she would be expected to do reasonably well by at least gaining a fraction of this ``unshared'' welfare, assuming that rankings are ``well behaved'' enough (which we discuss immediately below). 
The argument concludes by recalling the definition of the equilibrium; the social welfare of the equilibrium should upper bound the total expected utility of every such firm unilaterally deviating to this strategy. 

The next lemma establishes the ``well behaved'' property of the rankings that we need for our argument to hold and shows that it is the case for our modeling assumptions. 
In short, we claim that \emph{stochastic consistency} of the ranking strategies implies the following fact. For any two pure strategy profiles $\mbf{s}$, $\mbf{s}^*$, conditioned on the order of firms being $\beta$, whenever the random candidate that firm $i$ hires in profile $\mbf{s}^*$ is not selected by the firms before her in $\mbf{s}$, she would have done as well or better (than her result in the optimal profile $\mbf{s}^*$), competing in profile $\mbf{s}$, if she was using $\mbf{s}_i^*$. 
While this idea is intuitive, it turns out to not be the case in general: in some cases a higher ranked candidate being available may correlate with the value of the candidate being poor in expectation. 
We show that our intuition is true however, when the advice space is stochastically consistent. The proof of the statement below can be found in the Appendix. 

\begin{restatable}{lemma}{lemmaDev}\label{lemma:deviation}
Consider an Obedience-Constrained Random Serial Dictatorship mechanism with a stochastically consistent advice space. 
Let $c_i^{(\beta)}(\mbf{s})$ denote the candidate hired  by firm $i$ in profile $\mbf{s}$ when the order of firms is $\beta$. 
Let also $x_i^{(\beta)}(\mbf{s}) := x(c_i^{(\beta)}(\mbf{s}))$ be the value of this candidate. 
Further, define $C_i^{(\beta)}(\mbf{s})$ to be the set of candidates hired by the firms going before $i$ in profile $s$.\\

For any two pure strategy profiles $s$, $s^*$ the below holds:
\[\ex{x_i^{(\beta)}(s_i^*, \mbf{s}_{-i}) - x_i^{(\beta)}(\mbf{s}^*)}[\mbf{x}, \beta, c_i^{(\beta)}(\mbf{s}^*) 
\notin C_i^{(\beta)}(\mbf{s})] \geq 0\]
where the expectation is over the randomness of the samples drawn from the ranking technologies and $\mbf{x}$ represents the joint value vector of the candidates.
\end{restatable}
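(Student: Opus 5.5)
The proof will be a pair-swap coupling on the sample of the technology $R := s_i^*$. First, fix $\mbf{x}$ and $\beta$, and note that the firms preceding $i$ in $\beta$ play $\mbf{s}_{-i}$ in both $\mbf{s}$ and $(s_i^*,\mbf{s}_{-i})$. Their choices do not depend on firm $i$'s action, so $C_i^{(\beta)}(s_i^*,\mbf{s}_{-i}) = C_i^{(\beta)}(\mbf{s})$.

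Let $\mbf{r}$ be the ranking drawn from $R$, let $c^* := c_i^{(\beta)}(\mbf{s}^*)$, and let $d := c_i^{(\beta)}(s_i^*,\mbf{s}_{-i})$. Then $d$ is the top candidate of $\mbf{r}$ outside $C_i^{(\beta)}(\mbf{s})$, and $c^*$ is the top candidate of $\mbf{r}$ outside $C_i^{(\beta)}(\mbf{s}^*)$. On the conditioning event $c^*\notin C_i^{(\beta)}(\mbf{s})$, candidate $c^*$ is available to the deviator, so $d$ is ranked weakly above $c^*$ in $\mbf{r}$. Either $d=c^*$, which contributes $0$, or $d$ sits strictly above $c^*$ and $d\in C_i^{(\beta)}(\mbf{s}^*)\setminus C_i^{(\beta)}(\mbf{s})$. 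So it suffices to show that, summed over ordered pairs of candidates $(k,\ell)$ and position pairs $p<q$, the contributions of the events
\[E_{k,\ell,p,q} := \{d = r_p = k,\; c^* = r_q = \ell,\; c^* \notin C_i^{(\beta)}(\mbf{s})\}\]
pair up as $\pr{E_{k,\ell,p,q}}(x(k)-x(\ell)) + \pr{E_{\ell,k,p,q}}(x(\ell)-x(k)) \ge 0$.

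To get this pairing I will condition on three things: the samples of all technologies other than $R$, the positions set $\{p,q\}$, and the partial ranking $\mbf{r}^{-\{k,\ell\}}$. The only remaining randomness is whether $(r_p,r_q)=(k,\ell)$ or $(\ell,k)$. I then want to show that the swap $\sigma$, which exchanges $k$ and $\ell$ inside $\mbf{r}$, maps $E_{k,\ell,p,q}$ bijectively onto $E_{\ell,k,p,q}$. Given that, \Cref{def:sc} applied to whichever of $k,\ell$ has the larger value says the correctly ordered outcome is at least as likely. This holds almost surely under the conditioning, so the paired contribution is nonnegative, and integrating and summing over $(k,\ell,p,q)$ proves the lemma. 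Independence of the technologies (for fixed $\mbf{x}$) is what allows conditioning on the other samples without disturbing the law of $\mbf{r}$.

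The main obstacle is the invariance of the event under $\sigma$ when $R$ is a common technology in $\cal{A}$. In that case other firms, before $i$ in $\mbf{s}^*$ or in $\mbf{s}$, may also read $\mbf{r}$, so $C_i^{(\beta)}(\mbf{s})$ and $C_i^{(\beta)}(\mbf{s}^*)$ themselves depend on the relative order of $k$ and $\ell$. My plan is to simulate both processes run on $\sigma(\mbf{r})$ step by step and show they coincide with the original runs after relabeling $k\leftrightarrow\ell$. The only firms whose picks could differ are those using $R$, and under swapping a firm using $R$ picks $k$ in one world exactly when it picks $\ell$ in the other, because $k$ and $\ell$ occupy the same two slots. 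I therefore expect the hired sets before $i$ to transform by $\sigma$, which keeps the event structure intact. The delicate cases are those where an earlier $R$-user in $\mbf{s}^*$ takes $d$, or an earlier firm in $\mbf{s}$ takes one of $k,\ell$. I would first try to prove the invariance by induction along $\beta$. If exact invariance fails, I would restrict the pairing to the sub-event where no earlier firm hires $k$ or $\ell$ via $R$, and handle the complement separately by showing it contributes zero to the difference.
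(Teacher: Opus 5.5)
Your reduction to the events $E_{k,\ell,p,q}$ is fine. The pairing step fails, however, and it fails already in the case you treat as routine, $R=s_i^*$ private.

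Once you condition on the samples of every technology other than $R$, the sets $C_i^{(\beta)}(\mbf{s})$ and $C_i^{(\beta)}(\mbf{s}^*)$ are frozen, because no other firm reads $\mbf{r}$. On $E_{k,\ell,p,q}$ you need $k\in C_i^{(\beta)}(\mbf{s}^*)$ (it sits above $c^*$) and $\ell\notin C_i^{(\beta)}(\mbf{s}^*)$. The event $E_{\ell,k,p,q}$ needs the reverse. So under your conditioning at most one of the two events has positive probability.

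In fact $\sigma$ maps $E_{k,\ell,p,q}$ into $\{d=c^*=\ell\}$. After the swap, $\ell$ sits in slot $p$, every earlier slot lies in $C_i^{(\beta)}(\mbf{s})\cap C_i^{(\beta)}(\mbf{s}^*)$, and $\ell$ lies in neither set. Your paired inequality therefore collapses to $\Pr[E_{k,\ell,p,q}]\,(x(k)-x(\ell))\ge 0$. This is false whenever the competitor who hired $k$ before $i$ in $\mbf{s}^*$ took the worse candidate.

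The deeper problem is this: once everything but $\mbf{r}$ is conditioned away, the only hypothesis you can use is stochastic consistency of $R$ itself. \Cref{ex:deviation} shows the lemma is false when only firm $i$'s technology is stochastically consistent. So no refinement of the event bookkeeping can save the argument. That includes induction along $\beta$, and restricting to the sub-event where no earlier firm hires $k$ or $\ell$ through $R$, which in the private case is the whole event. The common-technology case is not the real obstacle.

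The missing idea is that the favorable correlation must come from the competitor who snatched $d$. So you cannot freeze the ranking $f$ of the firm that hired $d=k$ before $i$ in $\mbf{s}^*$. In $f$ too, $k$ is ranked above $\ell$, since $\ell=c^*$ was still available when that firm chose $k$.

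The paper swaps $k$ and $\ell$ simultaneously in $\mbf{r}$ and in $f$. The snatcher then takes $\ell$, and the snatched set becomes $(C_i^{(\beta)}(\mbf{s}^*)\setminus\{k\})\cup\{\ell\}$. By independence, the likelihood comparison factors into one consistency inequality for $R$ and one for the snatcher's technology; this is inequality \eqref{eq:eprime}.

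Alternatively, for private $R$ you can keep your single swap but pair across snatched sets instead of within them. Fix $C=C_i^{(\beta)}(\mbf{s})$ and a set $T$ with $k,\ell\notin C\cup T$. Then $\sigma$ maps $\{d=k,\,c^*=\ell\}$ under snatched set $T\cup\{k\}$ bijectively onto $\{d=\ell,\,c^*=k\}$ under snatched set $T\cup\{\ell\}$. What remains is to show $\Pr[C_i^{(\beta)}(\mbf{s})=C,\ C_i^{(\beta)}(\mbf{s}^*)=T\cup\{k\}]\ge \Pr[C_i^{(\beta)}(\mbf{s})=C,\ C_i^{(\beta)}(\mbf{s}^*)=T\cup\{\ell\}]$ whenever $x(k)\ge x(\ell)$. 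This is exactly where consistency of the competitors' technologies enters.

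If you use the double swap, watch the firms between the snatcher and $i$ in $\mbf{s}^*$. After the swap they find $k$ rather than $\ell$ available and may change their picks. The swapped outcome therefore lands in the paired event only if you verify this, or if you prove the monotonicity above directly, e.g. inductively along $\beta$.
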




\newcommand{\snatched}{W^{(\beta)}_{\mathrm{snatched}}(\mbf{s}^*, \mbf{s})}
\newcommand{\avail}{W^{(\beta)}_{\mathrm{available}}(\mbf{s}^*, \mbf{s})}

Now we are ready to prove our main result.
\begin{proof}[\Cref{thm:PoA}]
Consider a pure strategy profile $s=(s_1, \hdots, s_n)$. 
The outcome of the selected candidates depends on the rankings sampled from the technologies adopted by the agents, but also on the order $\beta$ that Random Serial Dictatorship draws (and the random candidate values $\mbf{x}$). 
Let the socially optimal profile be $s^*$ and define the following quantity the we previously described informally:
\[\snatched := \ex{\sum_{i \in [n]}x_i^{(\beta)}(\mbf{s}^*)\cdot\ind{c_i^{(\beta)}(\mbf{s}^*) \in C_i^{(\beta)}(\mbf{s})}}[\text{RSD order } = \beta]\]
(Note that when $\beta(i) = 1$ we have $C_i^{(\beta)}(\mbf{s}) = \emptyset$ and thus these terms are zero.)
Similarly, define: 
\[\avail := \ex{\sum_{i \in [n]}x_i^{(\beta)}(\mbf{s}^*)\cdot\ind{c_i^{(\beta)}(\mbf{s}^*) \notin C_i^{(\beta)}(\mbf{s})}}[\text{RSD order } = \beta]\]

To illustrate this, consider the example in \Cref{fig}. Suppose the order of the firms is fixed, and in that order from top to bottom we have, on the left, the candidates hired in profile $\mbf{s}$, and on the right, those hired in profile $\mbf{s}^*$. We have colored the candidates hired in $\mbf{s}^*$ either green or red. A green-colored candidate is one that would be still available to the corresponding firm in profile $\mbf{s}$, while a red-colored candidate is one would have been “snatched" from the corresponding firm, in profile $\mbf{s}$ and not available to be hired.

\begin{figure}[htp]
    \centering
    \includegraphics[width=0.5\linewidth]{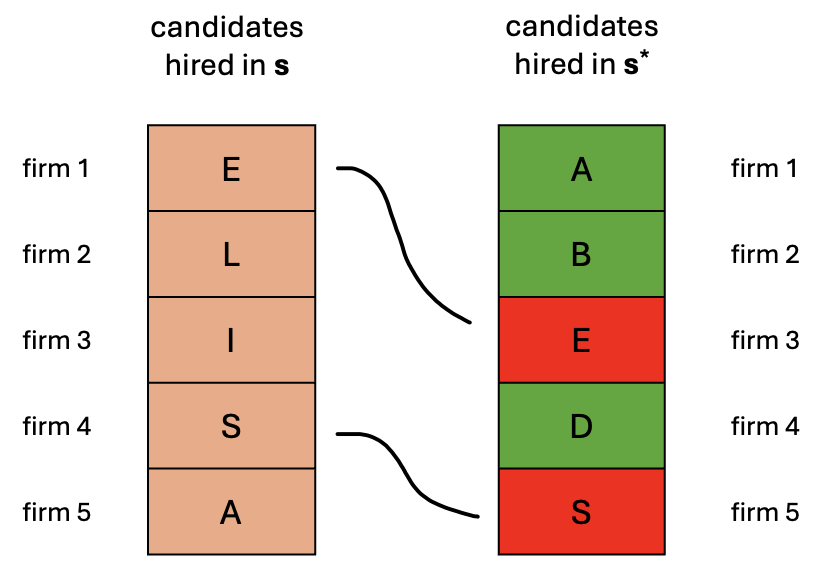}
    \caption{Candidates E and S are \textcolor{red}{snatched}, while A, B, D are \textcolor{ForestGreen}{available} to the corresponding firms.}
    \label{fig}
\end{figure}

Another way to define $W^{(\beta)}_{\text{snatched}}(\mbf{s}^*, \mbf{s})$ and $W^{(\beta)}_{\text{available}}(\mbf{s}^*, \mbf{s})$ is the following. Consider the two (random) random sequences $h^{(\beta)}_\mbf{s}, h^{(\beta)}_{\mbf{s}^*}$ of candidates hired in profile $\mbf{s}$ and $\mbf{s}^*$, according to the order of the firms being $\beta$. The quantity $W^{(\beta)}_{\text{snatched}}(\mbf{s}^*, \mbf{s})$ represents the expected welfare of the candidates of $h^{(\beta)}_{\mbf{s}^*}$ that appear earlier in $h^{(\beta)}_\mbf{s}$, while $W^{(\beta)}_{\text{available}}(\mbf{s}^*, \mbf{s})$ is the value of the rest of the candidates in $h^{(\beta)}_\mbf{s}$. Now, we distinguish two cases.

\paragraph{Case 1:} When \underline{$\ex[\beta]{\snatched} \geq \sfrac{1}{2}\cdot SW(\mbf{s}^*)$}:\\

In this case, we have $SW(\mbf{s}) \ge \ex[\beta]{\snatched} \geq \sfrac{1}{2}\cdot SW(\mbf{s}^*)$, as claimed. 
The first inequality holds because $SW(\mbf{s})$ represents the expected combined value of \emph{all} candidates
hired in profile $\mbf{s}$, whereas $\ex[\beta]{\snatched}$ represents the expected combined value of
a subset of the candidates hired in profile $\mbf{s}$.

\paragraph{Case 2:} When \underline{$\ex[\beta]{\snatched} < \sfrac{1}{2}\cdot SW(\mbf{s}^*)$}:\\

In this case, we have that equivalently: $\ex[\beta]{\avail} > \sfrac{1}{2}\cdot SW(\mbf{s}^*)$.  
Now, we use a proof outline reminiscent of the classical smoothness \cite{intrinsic_robustness} framework for price of anarchy. 
Suppose a firm in the equilibrium unilaterally deviates to the strategy she employs in the optimal profile $\mbf{s}^*$. 
The idea is to observe that if her choice in the optimal profile would not be taken by those before her in the equilibrium, she would hire either the candidate she would get in the optimal profile, or someone higher up in her list. 
In the latter case, \cref{lemma:deviation} shows that stochastic consistency (of the advice space) guarantees that her expected utility can only increase that way.

More formally, by definition:
\[\sum_{i \in [n]}u_i(s_i^*, \mbf{s}_{-i}) = \ex{\sum_{i \in [n]}x_i^{(\beta)}(s^*_i, \mbf{s}_{-i})}\]
Defining  $\cal{E}_i := \left\{c_i^{(\beta)}(\mbf{s}^*) 
\notin C_i^{(\beta)}(\mbf{s})\right\}$ for brevity and recalling that valuations are non-negative, we have:
\begin{align*}
\sum_{i \in [n]}u_i(s_i^*, \mbf{s}_{-i}) &\geq \ex{\sum_{i \in [n]}x_i^{(\beta)}(s^*_i, \mbf{s}_{-i})\cdot\ind{\cal{E}_i}}\\ 
&= \E_{\mbf{x}, \beta}\sum_{i \in [n]}\pr{\cal{E}_i \mid \mbf{x}, \beta}\ex{x_i^{(\beta)}(s^*_i, \mbf{s}_{-i})}[\mbf{x}, \beta,  \cal{E}_i]\\   \intertext{And now notice that by \Cref{lemma:deviation} this is: }
&\geq \E_{\mbf{x}, \beta}\sum_{i \in [n]}\pr{\cal{E}_i \mid \mbf{x}, \beta}\ex{x_i^{(\beta)}(\mbf{s}^*)}[\mbf{x}, \beta, \cal{E}_i]\\
&= \ex[\beta]{\avail}\\
&> \sfrac{1}{2}\cdot SW(\mbf{s}^*)
\end{align*}

Now, assume $\mbf{s}$ is a pure equilibrium; then, either case 1 holds, and we immediately conclude that $SW(\mbf{s}) \geq \sfrac{1}{2}\cdot SW(\mbf{s}^*)$ or not, in which case observe that by the equilibrium property of $\mbf{s}$ it must be:
\[SW(\mbf{s})\ge \sum_{i \in [n]}u_i(s_i^*, \mbf{s}_{-i}) \underset{\text{Case 2}}{\geq} \sfrac{1}{2}\cdot SW(\mbf{s}^*)\]

Finally, notice that nothing forbids us to extend this to mixed equilibria. We just need to consider each case in expectation over the randomness of the mixed strategy equilibrium, meaning, now we define $W^{(\beta)}(\sigma\cap\mbf{s^*}) := \ex[\mbf{s} \sim \sigma]{W^{(\beta)}(\mbf{s}\cap\mbf{s}^*)}$ for a mixed strategy profile $\sigma$. The proof then follows the same recipe, distinguishing cases according to what proportion of the optimal welfare this quantity represents. 
\end{proof}
In fact, we can extend the previous proof in a similar fashion to other general equilibrium concepts (e.g. correlated equilibria) or no regret sequences. One then might wonder if our proof can be recast in the framework of “smoothness". It turns out that the answer is yes: Stochastic Consistency implies that our game is “(1,1)-smooth", as defined by \citet{intrinsic_robustness}. The reader is referred to the appendix for a \hyperref[thm:smooth]{proof} of the fact.


%% file: 5assumptions.tex
\section{Generality of our assumptions}\label{sec:5assumptions}
In this section we discuss and establish the generality of our assumptions.

First of all, we demonstrate that the most natural and widely considered ranking distributions do indeed satisfy stochastic consistency. 
We show that it holds for the well-known \emph{Mallow's Model} distribution (with very mild assumptions) and for a very rich class of ranking technologies that we term \emph{Schur-concave rankings}. In \Cref{remark} below, we explain how the specific ranking distributions considered by \citet{monoculture_kleinberg_raghavan} fit within the two families outlined here, as special cases.

\subsection{Mallow's model}
Mallow's model \cite{mallows} has a long history in the literature of decision theory. 
It dictates that the probability of observing a certain ranking $\pi$ is proportional to $\phi^{d(\pi, \pi^0)}$, where $d$ is a distance measure on permutations \footnote{A distance function on the space of permutations $\mbf{S}_n$ of $[n]$ is a non-negative and symmetric function $d : \mbf{S}_n\times\mbf{S}_n \to [0,\infty)$.}, $\pi^0$ is the ground-truth ranking and $\phi \in (0, 1]$ is a \emph{dispersion} parameter. 
One of the most frequently used distance notions is the Kendall-Tau distance \cite{kendall}, which measures the number of inversions (disagreements between ranked pairs) between two permutations.

\begin{proposition}\label{prop:Mallow}
   The Mallow's Model of rankings with the Kendall-Tau distance is \emph{Stochastically Consistent}.
\end{proposition}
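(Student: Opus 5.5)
We prove this with a swap (transposition) coupling. Fix the value vector $\mbf{x}$ and let $\pi^0$ be a ground-truth ranking consistent with it, so $x(\pi^0_1)\ge x(\pi^0_2)\ge\cdots$. Ties can be broken arbitrarily: if $x(k)=x(\ell)$ we will show equality of the two probabilities, or at least we can choose $\pi^0$ to place $k$ above $\ell$. Fix positions $i<j$ and candidates $k,\ell$ with $x(k)\ge x(\ell)$, chosen so that $k$ precedes $\ell$ in $\pi^0$.

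Conditioning on $\mbf{r}^{-\{k,\ell\}}$ together with the event that $\{r_i,r_j\}=\{k,\ell\}$ fixes every position of the ranking except positions $i$ and $j$. The conditional event therefore consists of exactly two permutations:
\begin{itemize}
\item $\pi$, which places $k$ at position $i$ and $\ell$ at position $j$;
\item $\pi'=\pi\circ(i\ j)$, which agrees with $\pi$ except that $k$ and $\ell$ are swapped.
\end{itemize}
Hence the claimed inequality is equivalent to $\Pr[\pi]\ge\Pr[\pi']$. Under Mallows this is $\phi^{d(\pi,\pi^0)}\ge\phi^{d(\pi',\pi^0)}$. Since $\phi\in(0,1]$, it suffices to show
\[
d(\pi,\pi^0)\le d(\pi',\pi^0),
\]
where $d$ is the Kendall--Tau distance.

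The key step is a direct inversion count. We compare the pairs that are inverted relative to $\pi^0$ in $\pi$ versus in $\pi'$.
\begin{itemize}
\item Pairs not involving $k$ or $\ell$ are unchanged by the swap.
\item The pair $\{k,\ell\}$ is correct in $\pi$ and inverted in $\pi'$. This contributes $+1$ to $d(\pi',\pi^0)-d(\pi,\pi^0)$.
\item Candidates $c$ at positions outside the interval $[i,j]$ have the same relative order with respect to the set $\{k,\ell\}$ in both permutations. Their contributions cancel.
\end{itemize}
For a candidate $c$ at a position $p$ with $i<p<j$, write $k<c$ to mean $k$ precedes $c$ in $\pi^0$.
\begin{itemize}
\item In $\pi$, $c$ sits below $k$ and above $\ell$. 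The number of inversions involving $c$ is $\ind{c<k}+\ind{\ell<c}$.
\item In $\pi'$, $c$ sits below $\ell$ and above $k$. The count is $\ind{c<\ell}+\ind{k<c}$.
\end{itemize}
Since $k<\ell$ in $\pi^0$, a three-case check shows the difference $\pi'$ minus $\pi$ is nonnegative:
\begin{itemize}
\item if $c<k<\ell$, the difference is $1-1=0$;
\item if $k<c<\ell$, the difference is $2-0=2$;
\item if $k<\ell<c$, the difference is $1-1=0$.
\end{itemize}
Summing over all middle candidates gives $d(\pi',\pi^0)-d(\pi,\pi^0)=1+2\cdot\#\{c:k<c<\ell\text{ in }\pi^0\}\ge1$. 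This yields the inequality, and it is strict when $\phi<1$.

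The main obstacle is the treatment of ties, i.e.\ which $\pi^0$ is meant when $x$ has equal values. If $x(k)=x(\ell)$ and $\pi^0$ happens to place $\ell$ above $k$, the argument above gives the reverse inequality. We therefore have to adopt the mild assumption the paper alludes to. Either $\pi^0$ is taken to be any ranking consistent with $x$ and the definition is read with strict $x(k)>x(\ell)$. Or, when there are ties, the Mallows distribution is taken as a uniform mixture over all consistent $\pi^0$, in which case swapping the roles of $k$ and $\ell$ across that mixture makes the two probabilities equal. Finally, the ``a.s.''\ qualifier is immediate, because the inequality holds for every conditioning event of positive probability.
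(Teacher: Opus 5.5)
Your proof is correct and is essentially the paper's argument: swapping $k$ and $\ell$ pairs each outcome in which they are correctly ordered with one whose Kendall--Tau distance to $\pi^0$ is strictly larger, so that outcome is at most $\phi\le 1$ times as likely. Where you go further than the paper: your observation that the conditioning pins down exactly two permutations makes its closing ``law of total probability'' step immediate. Your three-case count also correctly justifies the distance increase, whereas the paper's claim that the swap ``does not remove any pre-existing inversions'' holds only at the level of counts, since an inversion with a middle candidate $c$ can move from the pair $\{k,c\}$ to $\{\ell,c\}$. Finally, you treat ties, where the definition forces equality that a fixed tie-broken $\pi^0$ would violate when $\phi<1$; the paper leaves this point implicit.
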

\begin{proof}
 That Mallow's Model is \emph{stochastically consistent} is fairly immediate: any realization where some pair is ranked correctly can be mapped to one where they switch positions (and obviously vice-versa). 
 This introduces \emph{at least} one inversion, the one between the pair (and
 of course, does not remove any pre-existing inversions). Hence, the Kandall-Tau distance is strictly greater. 
 Therefore, witnessing this realization instead, is at most $\phi \le 1$ times as likely, than the one where the relative ordering of the pair was correct. 
 A standard argument using the law of total probability then implies the first claim.
\end{proof}

However, the claim of \Cref{prop:Mallow} extends to an even larger umbrella of ranking families. We can actually characterize completely the class of Mallow's model ranking distributions that are stochastically consistent.

\begin{theorem}
    A Mallow's model ranking distribution is Stochastically Consistent if and only if its distance function is \emph{inversion increasing}. 
    
    A distance function $d$ is \emph{increasing with respect to inversions}, if for any permutation $\pi$ and any other permutation $\pi'$, which is a result of inverting a pair of $\pi$ that is ranked correctly (relative to each other), it holds that $d(\pi, \pi^0) \le d(\pi', \pi^0)$, where $\pi^0$ is the ground truth ranking. 
\end{theorem}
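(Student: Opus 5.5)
The plan is to prove the two directions separately. Both rely on a bijection on permutations: the transposition that swaps the contents of positions $i$ and $j$. Fix positions $i<j$ and candidates $k,\ell$ with $x(k)\ge x(\ell)$, so that $k$ is ranked above $\ell$ in the ground truth $\pi^0$ (ties handled by any fixed tie-breaking consistent with $\pi^0$). Also fix a partial ranking $\rho = r^{-\{k,\ell\}}$ of the remaining candidates. Then exactly two full permutations are consistent with the conditioning event:
\begin{itemize}
\item $\pi$, which places $k$ at $i$, $\ell$ at $j$, and the others according to $\rho$;
\item $\pi'$, which is $\pi$ with $k$ and $\ell$ swapped.
\end{itemize}
Since $(r_i,r_j)=(k,\ell)$ or $(\ell,k)$ together with $r^{-\{k,\ell\}}=\rho$ pins down the whole permutation, the conditional probabilities in Definition~\ref{def:sc} reduce to the ratio of the two point masses. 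Concretely, the inequality becomes $\phi^{d(\pi,\pi^0)} \ge \phi^{d(\pi',\pi^0)}$.

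For the ``if'' direction, observe that $\pi'$ arises from $\pi$ by inverting the pair $(k,\ell)$, which $\pi$ ranks correctly relative to $\pi^0$. Inversion-increasingness then gives $d(\pi,\pi^0)\le d(\pi',\pi^0)$. Because $\phi\in(0,1]$, the map $t\mapsto\phi^t$ is non-increasing, so $\Pr[\pi]\ge\Pr[\pi']$. This inequality holds for every conditioning value $\rho$, hence almost surely. The general conditional formulation follows by the same total-probability argument as in Proposition~\ref{prop:Mallow}.

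For the ``only if'' direction I would argue by contrapositive. Suppose some $\pi$ and some correctly ranked pair $(k,\ell)$ in $\pi$, at positions $i<j$, have an inversion $\pi'$ with $d(\pi,\pi^0)>d(\pi',\pi^0)$. Condition on $r^{-\{k,\ell\}}$ equal to the partial ranking induced by $\pi$. This event has positive probability, since Mallows puts positive mass on every permutation. The conditional probabilities of $(r_i,r_j)=(k,\ell)$ and $(\ell,k)$ are proportional to $\phi^{d(\pi,\pi^0)}$ and $\phi^{d(\pi',\pi^0)}$, and consistency would require the first to be at least the second. Here the boundary case $\phi=1$ is the main obstacle: the distribution is then uniform, and consistency holds trivially for any $d$. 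So the characterization must be read for $\phi<1$, or with the convention that the dispersion parameter is nondegenerate. I would state this explicitly. With $\phi<1$, strict monotonicity of $t\mapsto\phi^t$ yields $\Pr[\pi]<\Pr[\pi']$, a violation on a positive-probability event.

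A secondary subtlety is the meaning of ``ranked correctly'' when $x(k)=x(\ell)$. I would identify correctness with agreement with $\pi^0$, and assume $\pi^0$ is a linear order compatible with $\mbf{x}$, so that both directions quantify over the same pairs. If ties in $\mbf{x}$ are allowed, Definition~\ref{def:sc} demands the inequality in both orders, forcing equality. I would therefore either assume distinct values or require $d$ to be invariant under swapping tied candidates, and note this as a mild assumption.
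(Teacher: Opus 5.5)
Your proof is correct and follows essentially the same route as the paper. Conditioning on $r^{-\{k,\ell\}}$ leaves only the two permutations $\pi,\pi'$ that differ by swapping $k$ and $\ell$, so consistency reduces to $\phi^{d(\pi,\pi^0)}\ge\phi^{d(\pi',\pi^0)}$, and each direction then follows from monotonicity of $t\mapsto\phi^t$. Your caveats are well founded and more careful than the paper: its ``only if'' step silently needs $\phi<1$, since at $\phi=1$ the uniform distribution is consistent for any $d$ (e.g.\ Hamming). The paper also writes the resulting distance inequality with the wrong sign and does not address ties among candidate values.
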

\begin{proof}
    The “if" direction can be proved by tracing the proof of \Cref{prop:Mallow}. We show the “only if" direction. Suppose we have a stochastically consistent ranking distribution that follows Mallow's model with dispersion parameter $\phi \in (0,1]$ and some distance function $d : \mbf{S}_m\times\mbf{S}_m \to [0,\infty)$. Stochastic consistency states that conditional on the candidate value vector $\mbf{x}$ (which implies a ground truth ranking $\pi^0$) the following holds for any two candidates $k, \ell$.
    If $i < j \text{ and } x(k) \ge x(\ell)$, then
    \[\pr{r_i = k, r_j = \ell}[r^{-\set{k, \ell}}] \ge \pr{r_i = \ell, r_j = k}[r^{-\set{k, \ell}}] \]

    Thus, by Bayes, for any permutation \footnote{Technically speaking, for all but a measure 0 set of them. But, this has to be the empty set, as by definition, Mallow's model has full support on the set of permutations, which is finite.} $\pi$ (with $\pi_i = k, \pi_j = \ell$), we have that $i < j$ and $x(k) \ge x(\ell)$ implies that:
    \[\pr{r = \pi} \ge \pr{r = \pi'}, \text{ where $\pi'$ has $i,j$ swapped} \]

    But, since $\pr{r = \pi} = \phi^{d(\pi, \pi^0)}$ and $\phi \in (0,1]$, we must have $d(\pi, \pi^0) \ge d(\pi', \pi^0)$.
\end{proof}

 Other examples besides the Kendall-Tau distance discussed just previously, include the Spearman-Rho and Spearman-Footrule (see e.g. \cite{spearman_diaconis}), the former of which measures the squared $L_2$-distance of a ranking from the ground truth (i.e. if $\pi$ is the permutation then $d_{SR}(\pi) = \sum_{i}\l(\pi_i-\pi^0_i\r)^2$), and the latter the $L_1$-distance ($d_{SF}(\pi) = \sum_{i}\l|\pi_i-\pi^0_i\r|$). Both of these owe their name to Spearman's work on correlation coefficients \cite{spearman} and are also ubiquitously used, along with the Kendall-Tau distance. 
Distance functions that do not fall in this category usually arise from settings of combinatorial interest and not that of decision theory, such as the Cayley distance, which counts the minimum number of transpositions to make two permutations the same, or the Hamming distance, which measures how many positions in the ranking are incorrect.

The following statement outlines a very general sufficient condition for the distance function, which also substantiates the previous claims.
\begin{restatable}{proposition}{propdist}
    Let $g:[0,\infty) \to [0,\infty)$ be a non-decreasing, convex function. The distance function $(\pi,\pi') \mapsto \sum_{i=1}^mg(|\pi_i-\pi'_i|)$ is \emph{inversion monotone}.
\end{restatable}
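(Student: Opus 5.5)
We need to show that if $\pi'$ is obtained from $\pi$ by swapping the positions of a pair that $\pi$ ranks correctly relative to $\pi^0$, then $d(\pi',\pi^0)\ge d(\pi,\pi^0)$, where $d(\pi,\pi')=\sum_i g(|\pi_i-\pi'_i|)$. Only the two swapped coordinates change, so everything reduces to a four-term inequality. Read $\pi_i$ as the position assigned to candidate $i$. Suppose candidates $k,\ell$ satisfy $\pi^0_k=a<\pi^0_\ell=b$, so $k$ is truly better. Suppose also $\pi_k=p<\pi_\ell=q$, so the pair is ranked correctly. The swap gives $\pi'_k=q$ and $\pi'_\ell=p$. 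We must show
\[
g(|p-a|)+g(|q-b|)\;\le\; g(|q-a|)+g(|p-b|).
\]
This is the classical rearrangement inequality for the submodular cost $(u,v)\mapsto h(u-v)$ with $h:=g\circ|\cdot|$.

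First, $h(t)=g(|t|)$ is convex on $\mathbb{R}$. Indeed, $|\cdot|$ is convex, and $g$ is convex and non-decreasing on $[0,\infty)$, and such a composition is convex. Here the monotonicity of $g$ is used.

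Second, we prove that for convex $h$, $p<q$ and $a<b$,
\[
h(p-a)+h(q-b)\le h(q-a)+h(p-b).
\]
Set $s=p-b$ and $t=q-a$. Then $s$ is the smallest of the four arguments and $t$ the largest, since $s<p-a<t$ and $s<q-b<t$. Moreover $(p-a)+(q-b)=s+t$. Write $p-a=\lambda s+(1-\lambda)t$ and $q-b=(1-\lambda)s+\lambda t$ for some $\lambda\in[0,1]$; the second identity follows from the sum identity. Applying convexity to each term and adding gives the claim. The degenerate case $s=t$ cannot occur under strict inequalities, and ties are handled trivially.

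Third, if the convention is the other one, where $\pi_i$ is the candidate at position $i$, the same argument applies after relabelling. A swap still changes exactly two coordinates, and "correctly ranked" again means the two orderings agree, so we get the same four-term inequality with the roles of positions and candidates exchanged. By symmetry of the distance this covers both readings.

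The main obstacle is only bookkeeping: placing the four arguments so that the two "mismatched" differences are the extremes and the two "matched" ones have the same sum. The convexity inequality then finishes the proof. As examples, $g(t)=t^2$ gives Spearman-rho and $g(t)=t$ gives the footrule.
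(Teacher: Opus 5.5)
Your main argument, reading $\pi_k$ as the position of candidate $k$, is correct and is essentially the paper's proof. Both isolate the two changed terms and obtain $g(|p-a|)+g(|q-b|)\le g(|q-a|)+g(|p-b|)$ from convexity of $h=g(|\cdot|)$, with monotonicity of $g$ used exactly once. The paper gets there by showing $x\mapsto g(|x-k|)-g(|x-\ell|)$ is non-decreasing via a piecewise derivative computation. It assumes $g$ is differentiable and only asserts the general case. Your route proves $h$ convex on $\mathbb{R}$, then uses the two-point convexity bound with $p-b$ and $q-a$ as the extremes and equal sums. It needs no differentiability and is fully rigorous.

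Your third paragraph, however, is false as stated. Suppose $\pi_i$ denotes the candidate at position $i$ and the ground truth $\pi^0$ is arbitrary. Then the proposition fails: take $g(t)=t$, $\pi^0=(2,3,1)$, $\pi=(3,1,2)$. The pair $\{3,1\}$ is ranked correctly in $\pi$, yet swapping it gives $\pi'=(1,3,2)$ with $d(\pi,\pi^0)=4>2=d(\pi',\pi^0)$.

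Under that convention, your four-term inequality would need the numeric order of the labels $k,\ell$ to agree with that of $\pi^0_i,\pi^0_j$. ``Correctly ranked'' does not provide this. Relabelling candidates is also not free, because this distance depends on the numeric values of the labels. Symmetry $d(\pi,\sigma)=d(\sigma,\pi)$ does not exchange the roles of positions and candidates. What does is the identity $\sum_i g(|\pi_i-i|)=\sum_k g(|\pi^{-1}(k)-k|)$, which holds only when the ground truth is the identity. That is precisely the normalization the paper makes: it sets $\pi^0=(1,\dots,m)$ without loss of generality. So either drop the third paragraph or restrict it to that case.
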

The proof can be found in the \hyperref[prop:cvx_sum]{appendix}.

\subsection{Schur-concave rankings}
Before we define Schur-concave rankings we need the notion of an \emph{Additive noise ranking}, which ranks candidates by noisy score estimation. 
In such a ranking distribution, first, a joint noise vector $\bm{\eps}$ is sampled and then, candidates are ranked in decreasing order of their scores: $x_i + \eps_i$. 
The below definitions serve for defining \emph{Schur-concave rankings}:

\begin{definition}[Majorization]
    A vector $\mbf{x}\in \R^n$ majorizes another vector $\mbf{y}\in \R^m$ (and we symbolize $\mbf{x} \succ \mbf{y}$) whenever we have that:
    \[\sum_{i = 1}^mx_i = \sum_{i = 1}^my_i \text{, and for all } k \in [m]: \sum_{i = 1}^kx^{\downarrow}_i \geq \sum_{i = 1}^ky^{\downarrow}_i \]
    where above  $x^{\downarrow}_i$ is the $i$-th largest entry of $\mbf{x}$.
\end{definition}

\begin{definition}[Schur-Concavity]
    A function $f: \R^m \to \R$ is said to be \emph{Schur-Concave} whenever for all pairs $\mbf{x}, \mbf{y} \in \R^m$ such that $\mbf{x} \succ \mbf{y}$:
    \[f(\mbf{x}) \leq f(\mbf{y})\]
\end{definition}

A vector that is majorized by another is more "spread out" than the latter and so a first intuition for Schur-concave distributions is that they assign less probability mass to more `extreme' points.
With this information we are now ready to state our definition.
\begin{definition}[Schur-Concave Ranking]
    We call a ranking technology \emph{Schur-Concave} if the distribution of the ranking can be modeled by an additive noise ranking, such that the joint density of the noise vector is Schur-Concave.
\end{definition}

The below claim describes a quite well-known fact in majorization theory (in the language of our setting).\footnote{See e.g. \cite{schurConcave} and observe that it can be obtained as a corollary of \emph{Fact} 13.24 and \emph{Fact} 13.29.}

\begin{restatable}{proposition}{propLog}
    Additive noise rankings with identically and independently distributed log-concave noise (in each candidate) form a \emph{Schur-concave} ranking.
\end{restatable}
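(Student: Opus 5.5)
The plan is to show directly that the joint density $f$ of the noise vector $\bm{\eps}=(\eps_1,\dots,\eps_m)$ is Schur-concave when the $\eps_i$ are i.i.d.\ with a log-concave density $g$. Since the ranking is then, by construction, an additive noise ranking with noise density $f$, the definition of a Schur-concave ranking is satisfied. So everything reduces to the majorization claim that $f(\mbf{z})=\prod_{i=1}^m g(z_i)$ is Schur-concave on $\R^m$.

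First, I pass to logarithms. Write $g=e^{\psi}$ with $\psi:\R\to[-\infty,\infty)$ concave; this is the definition of log-concavity, with $\psi=-\infty$ off the support, which is an interval. Then $f(\mbf{z})=\exp\big(\sum_i\psi(z_i)\big)$. Because $\exp$ is increasing, it suffices to show that $\Psi(\mbf{z}):=\sum_i\psi(z_i)$ is Schur-concave. This is the classical fact, via Karamata / Hardy--Littlewood--P\'olya, that a separable sum of a concave function is Schur-concave.

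To prove the Schur-concavity of $\Psi$ in a self-contained way, I use the standard characterization of majorization. If $\mbf{z}\succ\mbf{y}$, then $\mbf{y}$ can be obtained from $\mbf{z}$ by a finite sequence of ``Robin Hood'' transfers (T-transforms). Each such transfer replaces two coordinates $(a,b)$ with $a\ge b$ by $(a-t,\,b+t)$ for some $0\le t\le a-b$, and leaves the other coordinates unchanged. By permutation symmetry of $\Psi$ and this reduction, it is enough to check the two-coordinate inequality
\[\psi(a-t)+\psi(b+t)\ \ge\ \psi(a)+\psi(b).\]
Write $a-t=\lambda a+(1-\lambda)b$ and $b+t=(1-\lambda)a+\lambda b$ with $\lambda=1-t/(a-b)\in[0,1]$ (the case $a=b$ is trivial). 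Concavity of $\psi$ then gives $\psi(a-t)\ge\lambda\psi(a)+(1-\lambda)\psi(b)$ and $\psi(b+t)\ge(1-\lambda)\psi(a)+\lambda\psi(b)$. Summing these yields the inequality. Values of $-\infty$ are harmless: if the right-hand side is finite, then $a$ and $b$ lie in the support interval, and so do the intermediate points.

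The main obstacle is the reduction from general majorization to finitely many T-transforms. This is a standard lemma of Hardy--Littlewood--P\'olya (also Muirhead), and I would cite it, as the paper's footnote to \cite{schurConcave} suggests, rather than reprove it. An alternative route avoids it and is equally clean: use that $\mbf{z}\succ\mbf{y}$ iff $\mbf{y}=D\mbf{z}$ for a doubly stochastic $D$, combined with Birkhoff's theorem and the concavity and symmetry of $\Psi$. That gives $\Psi(D\mbf{z})\ge\sum_\sigma\theta_\sigma\Psi(P_\sigma\mbf{z})=\Psi(\mbf{z})$. I would present the T-transform argument and mention the Birkhoff route as a remark.
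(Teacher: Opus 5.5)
Your proposal is correct and is essentially the paper's argument. The paper only cites standard majorization facts in its footnote: $\log f(\mathbf z)=\sum_i \psi(z_i)$, with $\psi=\log g$ concave, is symmetric and concave and hence Schur-concave, and composing with the increasing map $\exp$ preserves Schur-concavity. Your T-transform computation (citing the Hardy--Littlewood--P\'olya reduction) or your Birkhoff alternative unpacks exactly those facts, and your handling of $\psi=-\infty$ outside the support interval is sound.
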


Some examples of log-concave distributions are possibly the most well known: the Gaussian, the Laplacian, the uniform, beta, logistic and many more. 
The below theorem establishes the breadth of the class of distributions that our notion of ``well-behavedness'' (stochastic consistency) encompasses. 
We state its proof in the Appendix.

\begin{restatable}{theorem}{thmSchur}\label{thm:Schu-concave}
    \emph{Schur-concave rankings}  are \emph{Stochastically Consistent}.
\end{restatable}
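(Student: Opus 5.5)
We prove \Cref{thm:Schu-concave} by a pairwise swap (coupling) argument on the noise vector. Fix the value vector $\mbf{x}$, positions $i<j$, and candidates $k,\ell$ with $x(k)\ge x(\ell)$. Let $\tau$ denote the map on noise vectors that exchanges the coordinates $\eps_k$ and $\eps_\ell$. The plan is to show that $\tau$ maps the event ``$k$ at position $i$, $\ell$ at position $j$, with the rest of the ranking equal to a fixed partial ranking $\rho$'' bijectively onto the event ``$\ell$ at position $i$, $k$ at position $j$, with the rest equal to $\rho$''. We then compare the noise densities pointwise across this bijection. Conditioning on $r^{-\set{k,\ell}}=\rho$ has the same normalizing constant on both sides, so the stated inequality reduces to
\[\pr{(r_i,r_j)=(k,\ell),\, r^{-\set{k,\ell}}=\rho}\ \ge\ \pr{(r_i,r_j)=(\ell,k),\, r^{-\set{k,\ell}}=\rho}.\]
Because the noise has a density, ties occur with probability zero and can be ignored.

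\textbf{Step 1: the bijection.} Write the score of candidate $a$ as $z_a=x(a)+\eps_a$. The first event says that the other candidates have fixed scores ordered according to $\rho$, and that $z_k$ lies in the gap that puts it at position $i$ while $z_\ell$ lies in the gap that puts it at position $j$; in particular $z_k>z_\ell$. Rather than literally swapping noises, I will use the change of variables $\eps'_k = z_\ell - x(k)$ and $\eps'_\ell = z_k - x(\ell)$, leaving all other coordinates unchanged. This swaps the two scores, so it sends the first event onto the second, it is a measure-preserving affine bijection, and it keeps the other scores, and hence $\rho$, fixed.

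\textbf{Step 2: comparing densities.} Set $a=z_k$ and $b=z_\ell$ with $a>b$, and put $d=x(k)-x(\ell)\ge 0$. Under the original event the pair $(\eps_k,\eps_\ell)$ equals $(a-x(k),\, b-x(\ell))$; under the image it equals $(b-x(k),\, a-x(\ell))$. Both pairs have the same sum. The spread of the original pair is $|a-b-d|$ and that of the image pair is $a-b+d$. Since $a-b>0$ and $d\ge 0$, we have $|a-b-d|\le a-b+d$. For two-coordinate changes with all other coordinates fixed and the sum preserved, a larger spread means majorization: the image noise vector majorizes the original one. Schur-concavity of the joint density $f$ then gives $f(\text{original})\ge f(\text{image})$. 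Integrating over the event and applying the bijection of Step 1 yields the displayed inequality.

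\textbf{Main obstacle.} The delicate step is verifying the majorization claim in full. One must confirm that swapping only two coordinates with equal sum and larger spread yields a vector majorizing the original in the $\R^m$ sense, and not just in the two-coordinate sense. This is standard: the original pair is a convex combination of the image pair and its transposition, i.e.\ a T-transform. Besides this, one must check carefully that the change of variables exactly preserves the positions $i$ and $j$ together with $\rho$, which holds because every other score is fixed and the two scores are merely exchanged.
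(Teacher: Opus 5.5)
Your proposal is correct and is essentially the paper's proof. Both use the same measure-preserving affine map that exchanges the scores of $k$ and $\ell$, sending $(\varepsilon_k,\varepsilon_\ell)$ to $(\varepsilon_\ell-\delta_{k\ell},\,\varepsilon_k+\delta_{k\ell})$ with $\delta_{k\ell}=x(k)-x(\ell)$, as a bijection between the two events; both observe that the image pair has the same sum and a larger spread, so the image noise vector majorizes the original; and both apply Schur-concavity pointwise before integrating. Your explicit formula for the map is cleaner than the paper's composition $\xi\circ B$: as the paper defines $\xi$, the sign of the shift is reversed relative to what its argument actually uses.
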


\begin{remark}\label{remark}
    \citet{monoculture_kleinberg_raghavan} proved the following. \footnote{Informal restatement of their Theorem 1, adapted to our terminology.}
    
    \emph{Consider any two-firm game with the advice space consisting of a private technology (human evaluator) for each firm (with identical distribution) and one common ranking technology (algorithm). Suppose also that the private evaluators and the algorithm produce rankings from the same (parameterized) family of distributions, where the parameter can be thought of as an accuracy. Under sufficient conditions, it can be the case that the more accurate algorithm is a dominant strategy, but the social welfare is higher when both firms choose their evaluator.}
    
    In Theorem 3, they show that when this family corresponds to the Mallow's model ranking distributions with the Kendall-tau distance, then for any accuracy of the evaluator (dispersion parameter $\phi_H$), there exists an algorithm with higher accuracy (lower dispersion parameter $\phi_A$) such that the described phenomenon is true, regardless of the candidates' distribution. In Theorem 2, they establish an analogous result for a setting in which there are three candidates, and the family is Additive noise rankings with iid Gaussian (or Laplacian) noise, and where accuracy would correspond to the inverse of the standard deviation.
    
    These ranking distributions they analyzed are special cases of the two families introduced in this section. 
\end{remark}

\subsection{Trusting a ranking blindly}
In the rest of this section we investigate the constraint that the mechanism imposes on the firms (and its relaxation). 
Specifically, it is assumed that when it's her turn, each firm follows her ranking sample, in choosing the top ranked available candidate, no matter what she knows about the state of the world. 
However, this is not generally true for rational actors, if we (naturally) assume that they have the choice of selecting from the remaining candidates however they wish. 
This persists even if they don't have a Bayesian belief about the rest of the firms (for example they might have known that all firms have access only to well-behaved algorithmic vendors), depending on the candidate distribution and thus one might question the assumption that a firm will trust her ranking ``blindly''.


This behavior is also a problem in that it complicates a firm's decision making: it would be simpler if one was reassured that the best she could do in expectation, at any point, is to follow the ranking she sampled. 
The statement below shows that a very simple assumption about the world makes this desideratum (which we will call \emph{incentive compatibility}) true. 
Under this assumption, even if a firm knew exactly what technology the ones before her used, not just what candidate they received, she is better off choosing her top ranked choice that is available in the sample she received. Below we define the assumption in question.

\begin{definition}[Permutation Invariant Candidate Distribution] A candidate distribution $\cal{X}$ is \emph{permutation invariant} iff for any permutation $\pi$ of the set $[m]$ and any candidate value vector $\mbf{x} \in [0, \infty)^m$ the following holds for the probability density function concerning the joint values:
\[f_{\cal{X}}(\mbf{x}) = f_{\cal{X}}(\pi \circ \mbf{x})\]    
where $\pi \circ \mbf{x}$ denotes the permutation of the components of $\mbf{x}$ according to $\pi$.
\end{definition}
One can readily verify that this is satisfied by any product distribution with identical univariates (case of i.i.d. values). 
More generally, the condition does not require mutual independence of the candidate values, merely a notion of unidentifiability or symmetry in the \emph{likelihood} of the realizations of the world: ie, a name, tag, or other irrelevant attribute should not predict one's potential. 


We claim the following, which asserts that under this assumption, firms in the Unconstrained Random Serial Dictatorship mechanism have no incentive to behave differently than they would have in the Obedience-Constrained Random Serial Dictatorship. 
The proof can be found in the Appendix:
\begin{restatable}{proposition}{propIC}\label{prop:ic}
Unconstrained Random Serial Dictatorship, equipped with a \emph{stochastically consistent} advice space is \emph{incentive compatible} if the candidate distribution is \emph{permutation invariant}, even if a firm knows which technologies each firm before her selected.
\end{restatable}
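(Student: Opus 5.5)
The plan is to fix a firm $i$ and its turn, and condition on everything observable to it at that point. This includes the set $C$ of candidates already hired. Because the firm may know which technologies the earlier firms selected, it also includes those choices together with the candidates they received. Write $\mathcal{I}$ for this information. The firm additionally holds its own sampled ranking $\mbf{r}$ from technology $R$. The goal is to show that for any two available candidates $k,\ell\notin C$ with $k$ ranked above $\ell$ in $\mbf{r}$,
\[\ex{x(k)}[\mathcal{I},\mbf{r}] \ge \ex{x(\ell)}[\mathcal{I},\mbf{r}].\]
This makes the top-ranked available candidate an optimal choice. Iterating over the firms in RSD order, a firm's later-turn behaviour does not affect others' information in a way she can exploit, since each firm picks only once. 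It follows that following the ranking is optimal regardless of what the others do.

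The core step is a swap argument on the posterior over $\mbf{x}$. Let $\tau$ be the transposition exchanging $k$ and $\ell$, and compare the joint density of $(\mbf{x},\mathcal{I},\mbf{r})$ at $\mbf{x}$ and at $\tau\circ\mbf{x}$. Permutation invariance gives $f_{\mathcal{X}}(\mbf{x})=f_{\mathcal{X}}(\tau\circ\mbf{x})$. The next claim is that the likelihood of $\mathcal{I}$ is unchanged under the swap, because neither $k$ nor $\ell$ was hired by earlier firms. To justify it, I will rely on the independence of the technologies (given $\mbf{x}$). The event that earlier firms hired exactly the observed sequence depends on their rankings only through the positions of the hired candidates relative to everyone else. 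Relabelling $k\leftrightarrow\ell$ in their sampled rankings maps outcome-consistent rankings bijectively onto outcome-consistent rankings. This needs a symmetry of $\mathcal{F}_{R'}$ under simultaneously relabelling candidates and values, i.e.\ that technologies are label-equivariant, which I will either take as implicit or state as a mild assumption.

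For firm $i$'s own ranking, stochastic consistency is applied after conditioning on $\mbf{r}^{-\{k,\ell\}}$ and on the positions $\{p,q\}$ of $k,\ell$ with $p<q$. It states that when $x(k)\ge x(\ell)$, observing $(r_p,r_q)=(k,\ell)$ is at least as likely as observing $(\ell,k)$. Combined with the equivariance, the likelihood of the observed $\mbf{r}$ under $\mbf{x}$ dominates that under $\tau\circ\mbf{x}$ whenever $x(k)\ge x(\ell)$. The observed $\mbf{r}$ places $k$ above $\ell$. If a shared technology is used both by $i$ and by an earlier firm, then $\mbf{r}$ is already part of $\mathcal{I}$. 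The same argument applies to that single sample, since the earlier firm's hires are determined by it together with the other, independent samples.

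The conclusion then comes from pairing each $\mbf{x}$ having $x(k)>x(\ell)$ with $\tau\circ\mbf{x}$. The posterior weight on the pair splits as $w(\mbf{x})\ge w(\tau\circ\mbf{x})$. The contribution of the pair to $\ex{x(k)-x(\ell)}[\mathcal{I},\mbf{r}]$ is $(w(\mbf{x})-w(\tau\circ\mbf{x}))(x(k)-x(\ell))\ge 0$, and integrating over pairs gives the claim.

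I expect the main obstacle to be the invariance of the likelihood of $\mathcal{I}$ under $\tau$. It requires tracking how the earlier firms' outcomes depend jointly on shared samples, and it needs the label-equivariance of ranking technologies, which the paper does not state explicitly. I would first try a clean coupling: relabel $k\leftrightarrow\ell$ in every sampled ranking simultaneously and check that RSD outcomes not involving $k,\ell$ are preserved.
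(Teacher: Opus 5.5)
Your proof is correct, granted the label-equivariance you flag, but it is organized differently from the paper's.

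\textbf{How the two routes differ.} The paper splits the argument in two steps.
\begin{enumerate}
\item It claims that, conditioned on the history $D(\mathbf{s})=(c_1,\dots,c_k)$, the values of the unhired candidates remain exchangeable. It infers that the firm gains nothing by reacting to the \emph{identities} in her sample, so any deviation may be taken to be ``hire $r_q^{-D}$ for some position $q>1$''.
\item It then shows $\mathbb{E}[x(r_1^{-D})-x(r_q^{-D})]\ge 0$. This uses stochastic consistency directly, conditionally on $\mathbf{x}$, through the same pairwise decomposition over $\{k,\ell\}$ as in Lemma~\ref{lemma:deviation}.
\end{enumerate}
You instead merge permutation invariance and stochastic consistency into one likelihood-ratio swap on the posterior. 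This gives a stronger statement: for every realized history and sample, the posterior means of the available candidates are ordered along the sample.

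\textbf{What each route buys.} The paper's split isolates what consistency alone delivers: no position-based deviation helps, under any prior and with no symmetry assumption. Your route handles deviations that adapt to the realized sample directly. It also accounts explicitly for how the history reweights the posterior on $\mathbf{x}$, including the shared-technology case. The paper instead relies on the informal reduction ``her strategy \dots\ shall not depend on the sample $r$ she sees''.

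\textbf{Equivariance is genuinely needed.} You are right to require it, and it is not an artifact of your method. The paper uses it silently, both when it calls the history event ``(trivially) symmetric'' and in the reduction above. A concrete failure without it:
\begin{itemize}
\item Take three i.i.d.\ uniform candidates.
\item Use a technology that reports the true order (ties broken uniformly) when $x(1)>x(2)$, and a uniformly random order otherwise. It is stochastically consistent.
\item After the sample $\mathbf{r}=(3,2,1)$, the posterior is the prior conditioned on $x(1)<x(2)$. Hence $\mathbb{E}[x(2)\mid\mathbf{r}]=2/3>1/2=\mathbb{E}[x(3)\mid\mathbf{r}]$.
\item So even the first firm, with no history at all, prefers to deviate.
\end{itemize}
You should therefore state equivariance as an explicit hypothesis.

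\textbf{Scope of the conclusion.} Temper ``regardless of what the others do''. The invariance of the likelihood of $\mathcal{I}$ under the swap uses that earlier firms hired their top available candidate, or at least used a label-equivariant selection rule. What you actually prove is that obedience is a best response to obedience, which is the intended notion of incentive compatibility.
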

The main argument is that even conditioned on some specific subset of candidates being available, the firm cannot ``distinguish'' any one of them as having lesser expected value from any other and thus will not gain in expectation by strategically permuting a sample of a stochastically consistent ranking (even adaptively to which candidate is recommended by her chosen ranking).

\Cref{prop:ic} implies that the same price of anarchy bound holds for the \emph{Unconstrained} Random Serial Dictatorship, assuming that firms follow the (weakly) dominant strategy of trusting their ranking. 
Moreover, under some additional mild assumptions about the candidate distribution (assuming the distribution is \emph{atomless}) and the advice space (assuming \emph{strict} consistency: a strict inequality for candidates with different values) trusting one's ranking is \emph{strictly} dominant.


%% file: 6lowerBounds.tex
\section{Lower bounds and violation of assumptions}
\label{sec:lowerBounds}
In this section we show that the bound on the price of anarchy we presented under the considered model assumptions is asymptotically tight, up to $O(1/n)$ terms. We continue by illustrating the importance of stipulating that ranking technologies in our setting are stochastically consistent.
We show that when one or more ranking strategies can be arbitrary there can be pathological scenarios where even a dominant strategy equilibrium can be vastly worse than the optimal social welfare, by a factor linear in the number of firms. 
Finally, we conclude this section by revisiting the definition of stochastic consistency, introducing a relaxation of the definition that pertains to general advice spaces and quantifies the extent to which they violate stochastic consistency. We provide bounds that scale gracefully as stochastic consistency ``degrades''.

\subsection{A tight lower bound construction}
\begin{theorem}\label{thm:tight_ex}
    The Price of Anarchy bound of 2 is asymptotically tight.
\end{theorem}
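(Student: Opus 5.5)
Our plan is to give a family of games, indexed by the number of firms $n$, in which every ranking technology is stochastically consistent (indeed a Mallow's model or additive-noise ranking in the limit of no or full noise). In this family a pure Nash equilibrium, in fact a dominant strategy one, has welfare close to $1/2$ of the optimum.

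The construction we would try first is a common-value ``one good candidate per firm versus shared top'' design. There are $n$ firms and $m$ candidates. A common technology $A$ is (nearly) perfectly accurate. Each firm also has a private technology $H_i$ that is uniformly random among rankings, which is trivially stochastically consistent since both orders of any pair are equally likely. Perfect accuracy alone does not give the gap, because all firms on $A$ still get the top $n$ candidates in order. So values must be set so that the common signal concentrates everyone on the same few candidates while the idiosyncratic signals spread them over many.

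Concretely, we take one candidate of value $1$ and a large number $M$ of candidates of value $\epsilon$ (or $0$). We make $A$ identify the value-$1$ candidate perfectly but rank the rest uniformly. We make each $H_i$ a noisy ranking that puts the valuable candidate first with probability $p$, independently across firms. This can be realized by Mallow's model or by Gaussian additive noise with a suitable variance, so that \Cref{thm:Schu-concave} or \Cref{prop:Mallow} certifies stochastic consistency.

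Under monoculture only one firm can get the value-$1$ candidate, so $SW\approx 1$. With private signals the valuable candidate is still hired by at most one firm, so the welfare is again at most about $1$. This single-star design therefore gives no gap. We would instead use a distribution $\cal X$ in which the identity of the valuable candidates is random, with several of them. We would tune the game so that the equilibrium welfare is about $1$ while the optimal welfare is about $2-O(1/n)$. The target is a profile mixing common and private signals, in which the private signals recover candidates the common one misses.

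The step we expect to be the main obstacle is choosing $p$ and the values so that two conditions hold at once. First, switching to $A$ is (weakly) dominant for every firm. Second, the best profile collects nearly twice the equilibrium welfare. The second condition amounts to extracting almost twice the value from the private signals' diversity, and the $O(1/n)$ loss is expected to come from the last firms in the random order. We would compute both welfares as explicit sums over the random order $\beta$, using linearity over firms. We would then pick parameters, letting $n\to\infty$ and $M\gg n$ so that collisions under $H_i$ become negligible, and show that the ratio tends to $2$.
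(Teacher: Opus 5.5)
\textbf{Gap: no instance is constructed.} The statement is proved only by exhibiting a concrete family of instances and checking that some profile is a Nash equilibrium while another profile has welfare $2-o(1)$ times larger. The one concrete design you write down you rightly discard. Its replacement (``several'' valuable candidates, parameters ``tuned'' so the ratio tends to $2$) is described only by the properties it should have. The step you call the main obstacle is in fact the entire content of the theorem.

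The missing idea is how to create about one extra unit of value that no single firm wants to pursue but the group can recover. In the bad equilibrium each firm earns about $1/n$. So the alternative technology can be worth at most about $1/n$ to a lone deviator, yet the firms using it must jointly recover about $1$. Their successes must therefore be essentially disjoint.

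The natural completion of your sketch does not achieve this. Take a single second candidate $b$ that $A$ ignores and that each $H_i$ independently ranks first with probability $p$.
\begin{itemize}
\item Consistency of $A$, which always puts the value-$1$ candidate first, forces $x(b)=v<1$: a ranking with $b$ first and $a$ later has probability $0$ under $A$, while its swap does not.
\item The equilibrium condition forces $pv$ to be at most about $1/n$.
\item Independence then means $b$ is hired in any profile with probability at most about $1-(1-p)^n\approx 1-e^{-1/v}$.
\end{itemize}
So the best welfare is about $1+v(1-e^{-1/v})\le 2-1/e$, against an equilibrium welfare of about $1$. Letting $M\gg n$ does not help. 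The loss comes from independent signals piling onto the same candidate, not from collisions among zero-value candidates.

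\textbf{How the paper closes this.} The paper gets the needed coordination from a second \emph{common} technology. There are $2^n$ candidates; a random pair has values $1$ and $v(n)$, and all others have value $0$. $A$ puts the value-$1$ candidate first and the rest uniformly. $A'$ places both nonzero candidates uniformly at random in the first $n$ positions.
\begin{itemize}
\item All firms on $A'$ see the same ranking, so under $A'^{\,n}$ they hire exactly its first $n$ entries, hence both valuable candidates, for welfare $1+v(n)$.
\item A lone deviator to $A'$ against $A^{n-1}$ gets only about $\frac{1+v}{n^2}+\frac{v}{n}$. When she is not first, the $v$-candidate is her top remaining choice with probability $\frac{1}{n-1}$.
\item This is below $u(A,A^{n-1})\approx\frac1n$ for $v(n)\approx 1-\frac{2}{n+1}$, while $SW(A^n)=1+O(n2^{-n})$, giving ratio $2-O(1/n)$.
\end{itemize}

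\textbf{Rescuing your private-signal route.} Your approach can also work if you remove the single-target bottleneck. Let $a$ be the value-$1$ candidate, add a set $B$ of $K\gg n$ second-tier candidates of value $w<1$ and $M\gg nK$ zeros. Let $H_i$, with probability $q$, put a uniformly random member of $\{a\}\cup B$ first, and otherwise be uniform. Both technologies pass the pairwise swap test, and all-$A$ is an equilibrium once $qw$ is slightly below $1/n$. One firm on $A$ with the rest on $H$ then collects about $1+(n-1)qw\to 2$, because the firms' hits land on distinct candidates.

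Either way, you must check stochastic consistency of such technologies directly. A ranking that is perfect on the star but blind to a candidate of value close to $1$ is neither an i.i.d.\ additive-noise ranking nor a Kendall-tau Mallows model. So the Schur-concavity theorem and the Mallows proposition do not certify it, contrary to what your first paragraph suggests.
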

\begin{proof}
    Consider a symmetric game with $n$ firms and two ranking technologies for each one: common ranking technologies $A$ and $A'$. 
    Further, consider the following permutation invariant distribution on the candidate values; there are $2^n$ candidates, a random pair of them have values $1$ and $v(n) \in (0, 1)$ (to be determined below), while all other candidates have value $0$. 
    
    The ranking technology $A$ produces rankings that always put the candidate with value $1$ on the first position; the rest of the ranking constitutes a uniform permutation on the remaining candidates. 
    On the other hand, $A'$ always puts the two non-zero value candidates in the first $n$ positions, uniformly at random and the rest of the ranking also constitutes a uniformly random permutation on the remaining candidates.

    Notice that this game instance satisfies the assumptions that lead to our bound; the ranking technologies are stochastically consistent. Also the candidate distribution is permutation invariant which implies incentive compatibility for the relaxed model.

    The optimum profile is for all firms to choose $A'$, resulting in the maximum welfare of $1+v(n)$. Observe that the utility of a firm choosing $A$, while everyone else also chooses $A$ is:
    \[u(A, A^{n-1}) = 1/n + v(n)\cdot \Theta(2^{-n})\]

    Now, consider a deviation of one such firm to strategy $A'$. With probability $1/n$ the firm chooses first and her conditional expected utility is $1/n(1 + v(n))$. 
    With the remaining $1-1/n$ probability the firm is not choosing first and cannot hire the candidate with value 1, but still hopes to get the candidate with value $v(n)$. 
    This either happens because the candidate with value $v(n)$ ranks above all candidates with value 0 (probability $\frac{1}{n-1}$) and none of the earlier firms using strategy $A$ picked that candidate (probability $1 - O(n \cdot 2^{-n})$) or because the candidate with value $v(n)$ is ranked below at least one candidate with value 0, but all higher-ranked candidates were chosen by earlier firms using strategy $A$ (probability $O(n \cdot 2^{-n})$). 
    In total, then, the expected utility of using strategy $A'$ against profile $A^{n-1}$ is
    \begin{align*}
        u(A', A^{n-1})  & = \tfrac1n \cdot \tfrac1n \left(1 + v(n) \right)
        \, + \, \left(1 - \tfrac1n\right)\!\Big[\tfrac{1}{n-1}(1-O(n2^{-n})) \, + \, O(n2^{-n})\Big] v(n) \\
        & \leq \tfrac{v(n)}{n} + \tfrac{1}{n^2} + v(n)(\tfrac{1}{n^2} + O(n2^{-n}))
    \end{align*}
    We can choose $v(n)$ so that $A^n$ is an equilibrium, to that end we want $u(A, A^{n-1})$ $> u(A', A^{n-1})$ to hold and our goal is to maximize $1 + v(n)$, the asymptotic ratio of $SW(A'^n)/SW(A^n)$.

    If we ignore the asymptotics, we ask to maximize $v(n)$ subject to the constraint $1/n > v(n)(1/n + 1/n^2) + 1/n^2$. 
    Thus, for some $v(n) = 1 - 2/(n+1) - o(1/n)$ we have that $A^n$ is indeed an equilibrium. We can estimate the social welfare of the strategy profiles $A^n$ and $A'^{\, n}$ as follows:
    \begin{align*}
        SW(A^n) & = 1 + n \cdot v(n) \cdot \Theta(2^{-n}) \\
        SW(A'^{\, n}) & = 1 + v(n) = 2 - 2/(n+1) - o(1/n) .
    \end{align*}
    From these estimates we conclude that the price of anarchy is $2 - O(1/n)$.
\end{proof}

\subsection{Arbitrary advice spaces lead to arbitrarily bad results}
The next example, while unrealistic, shows that in the setting of \cite{monoculture_kleinberg_raghavan} even a dominant strategy equilibrium can be arbitrarily bad, when ranking technologies are not stochastically consistent. 
\begin{proposition}
    The price of anarchy is $\Omega(n)$ (linear in the number of firms), for an arbitrary advice space, in the Obedience-Constrained Random Serial Dictatorship. 
    Further, there exist game instances where there is a (strict) dominant strategy, the socially optimal profile has no firm using the dominant strategy and the optimal welfare is a factor of $\Theta(n)$ larger than that of the dominant strategy equilibrium.
\end{proposition}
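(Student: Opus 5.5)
We construct an explicit instance in the setting of \cite{monoculture_kleinberg_raghavan}. Each firm has one idiosyncratic technology $H_i$ and there is one common technology $A$. The ranking technologies are deliberately anti-consistent, so that conditioning on availability punishes competition. The candidate values are deterministic. There are $n$ ``good'' candidates $g_1,\dots,g_n$ of value $1$ and a large pool of ``junk'' candidates of value $0$. The common technology $A$ puts the good candidates in its top $n$ positions in a uniformly random order. The private technology $H_i$ is adversarial: it places one uniformly random good candidate first and all junk candidates immediately after it, with the remaining good candidates at the very bottom. Such $H_i$ grossly violates Definition~\ref{def:sc}, which is allowed here.

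We then compute payoffs. In the profile $H^n$, firm $i$ gets value $1$ only if its single top good candidate has not already been taken. Junk is plentiful, so earlier $H$-firms mostly take their own top good candidate or junk, and $u_i(H^n)$ is roughly the probability that no earlier firm drew the same good candidate. That is a constant, so this naive choice gives no linear gap. We therefore fix the private distribution so that every $H_i$ places the \emph{same} fixed good candidate $g_1$ first (still private and independent, just degenerate). Then in $H^n$ only the first firm gets value $1$, and $SW(H^n)=1$. In $A^n$, all firms share one ranking whose top $n$ entries are the good candidates, so the first $n$ picks exhaust them and $SW(A^n)=n$. This inverts the intended roles, because $A^n$ is now the optimum. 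To get the desired statement we swap the labels: call the degenerate shared-bad source the common technology $A$, and give each firm a private $H_i$ that lists the good candidates in random order on top. Then $H^n$ yields welfare close to $n$, since with $n$ good candidates most firms still find one: each $H_i$ lists all $n$ good candidates first, so a firm always gets a good candidate unless all $n$ are gone, which never happens before its turn. Hence $SW(H^n)=n$ exactly, while $SW(A^n)=1$.

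The key step, and the main obstacle, is making $A$ a \emph{strict dominant strategy} while the optimum still uses no $A$. This requires that, against any profile, choosing $A$ beat choosing $H_i$. We achieve it by adding a high-value ``prize'' candidate $p$ of value $V$ that $A$ always ranks first and every $H_i$ ranks last (a further violation of consistency). Against any profile, the firm picking $A$ gets $p$ whenever no earlier firm used $A$. Otherwise, with $A$ it receives what $A$ lists next. We must design $A$'s continuation so that its value is still at least what $H_i$ would give, for instance by having $A$ list the good candidates after $p$ in the same random order as in $H_i$'s coupling. Alternatively, we can choose $V$ and the good-candidate values so that the expected gain $V\cdot\Pr[\text{first }A\text{-user}]$ dominates. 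The tension is that a large $V$ also inflates $SW(A^n)$ to $V$, while $SW(H^n)\approx n$. We therefore set $V$ to a constant (say $V=2$) and rely on the good-candidate structure for the $\Theta(n)$ ratio, checking case by case that deviating to $H_i$ never helps. Finally, the $\Omega(n)$ price of anarchy follows immediately from this dominant-strategy equilibrium, since $SW(\text{opt})/SW(A^n)=\Theta(n)$.
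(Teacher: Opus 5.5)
Your construction has a genuine gap at exactly the step you flag as the main obstacle, and the gap cannot be closed within your design.

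\textbf{Why your final instance fails.} In it, $A$ lists $p$, then $g_1$, then junk, and each $H_i$ lists the $n$ unit-value good candidates first and $p$ last. A lone $H_i$-user facing $A^{n-1}$ always finds a good candidate left, so $u(H_i,A^{n-1})=1$. By symmetry, $u(A,A^{n-1})=SW(A^n)/n=(V+1)/n$. With $V=2$ this equals $3/n<1$ for $n\ge 4$. So $A$ is not even a best response to $A^{n-1}$, let alone strictly dominant. In fact every pure equilibrium of this instance has at most three $A$-users and welfare at least $n-1$, while the optimum is at most $n+1$.

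\textbf{Why no tuning rescues it.} Whenever $A^n$ is an equilibrium of a symmetric instance, $SW(A^n)=n\,u(A,A^{n-1})\ge n\,u(H,A^{n-1})$. A $\Theta(n)$ gap therefore requires the optimum's welfare per firm to be $\Theta(n)$ times what a \emph{lone} $H$-deviator obtains. If the valuable candidates all have equal value and $H$ lists them all on top, these two quantities coincide. Hence no choice of $V$ or of $A$'s continuation helps:
\begin{itemize}
\item Making $A$ a best response forces $V\ge n-1$. The optimum is at most $V+n-1$, which is then less than twice $SW(A^n)=V+1$.
\item Listing the good candidates after $p$ in $A$ gives $SW(A^n)=V+n-1$, which removes the gap outright.
\end{itemize}

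\textbf{The missing idea.} Let $H$ violate consistency in the \emph{order} in which it lists the valuable candidates. The paper uses $n$ candidates with values $a_1<\dots<a_n$, where $a_n=\Theta(n^2)$. Every $H_i$ ranks them first in \emph{increasing} order of value. $A$ ranks a prize of value $n$ first, followed by the $n-1$ zero-value candidates.
\begin{itemize}
\item Against $n-k-1$ $A$-users and $k$ other $H$-users, choosing $A$ yields $n/(n-k)$.
\item Choosing $H$ yields the average of $a_1,\dots,a_{k+1}$, because the $H$-users harvest the $a_j$ from the bottom up.
\end{itemize}
The telescoping choice $a_j=1+\frac{j(j-1)}{n-j+1}-\frac{(j-1)(j-2)}{n-j+2}-\varepsilon/n^2$ makes this average exactly $1+\frac{k}{n-k}-\varepsilon/n^2<\frac{n}{n-k}$ for every $k$. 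So $A$ is strictly dominant with $SW(A^n)=n$, whereas $H^n$ collects all of the $a_j$, for welfare $\Theta(n^2)$. A lone $H$-deviator gets only about $a_1\approx 1$, while in $H^n$ each firm earns $\Theta(n)$ on average. That is precisely the separation your unit-value design cannot produce.
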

\begin{proof}
    We consider $n$ firms as usual and $2n$ candidates. 
    The candidate distribution is such that uniformly at random, a set of $n-1$ of them have value $0$, one of them has value $n$ and the $n$ rest have unique values from the increasing sequence $\{a_i\}_{i \in [n]}$ with $a_{j} = 1 + \frac{j(j-1)}{n-j+1} -\frac{(j-1)(j-2)}{n-j+2} - \eps/n^2$ for some small $\eps > 0$. Notice that this distribution is permutation invariant.
    
    Firms have access to two ranking technologies, a common $A$ and a private $H$ (of the same distribution for any firm). 
    Rankings from $A$ always rank the candidate with value $n$ at the first position, the $n-1$ candidates with value $0$ in the next positions and then everyone else arbitrarily. 
    On the other hand, rankings from $H$ rank the candidates with values from the sequence in the first $n$ positions in increasing order and everyone else arbitrarily in the next positions.

    For brevity, let $A^kH^{n-k}$ denote the profile where $k$ firms have chosen $A$ and $n-k$ chose their private $H$. 
    Clearly, $u(A, A^{n-k-1}H^{k}) = 1 + \frac{k}{n-k}$ by construction of $A$: a firm choosing $A$ gets non-zero value ($n$) if it is the first to make a selection among the $n-k$ firms choosing $A$. 
    Now, also observe that $u(H, A^{n-k-1}H^{k}) = \frac{1}{k+1}\sum_{i=1}^{k+1}a_i = 1 + \frac{k}{n-k} - \eps/n^2$. Thus, $A$ is indeed a dominant strategy.

    It's clear that the socially optimal profile is $H^n$: changing any firm to employ $A$ results in the candidate with value $n$ getting hired, but consequently, society loses out on the candidate with value $a_n = \Omega(n^2)$. Further, any profile with two or more firms employing $A$ is dominated by the profile where only one out of them employs $A$, and all others use $H$. The optimal social welfare then is $SW(H^n) = \frac{n(n+1)}{2} - \eps/n$, while the dominant strategy equilibrium welfare is just $n$.
\end{proof}

\section{On stochastic inconsistency}\label{sec:relaxation-of-sc}
We have argued in previous sections that stochastic consistency is a broadly encompassing condition. It not only provides guarantees that 
bound the social welfare loss due to monoculture in the setting of \cite{monoculture_kleinberg_raghavan}, but also 
extends these guarantees to an extremely wide range of ranking distributions (including the standard ones used in models of stochastic choice) and to the more general setting of advice spaces and the model detailed in \Cref{sec:3prelim}.
Further, its definition is not merely a technical measure of ``well-behavedness'', but corresponds to an important desideratum for ranking: the relative ranking of two candidates should not depend on that of other people, or on characteristics that do not relate to their ability.

In an ideal world people and any advice they use to make decisions would follow this principle. However, obviously, the world is far from ideal: implicit and systemic bias is prevalent. This becomes ever more relevant with the rise of machine learning models and LLMs and, crucially, their use for advice in decision making. The biases of large language models have been documented, for example, in \cite{explicit_bias,bias_iceberg} and, most relevantly to our application, in \cite{hiring_bias}).
The question then becomes: what happens when stochastic consistency is violated, but not arbitrarily (that is, inconsistency can be measured by some factor)? We give an answer to this question as well. We begin by defining exactly this notion of the measure of stochastic “inconsistency".

\begin{definition}[\ssc]\label{def:delta_sc}
    A ranking technology $R$ is \ssc if the following holds for all $i < j \in [m]$, $k, \ell \in [m]$. If $x(k) \ge x(\ell)$ then almost surely:
    \[\pr[r\sim \cal{F}_R(x)]{(r_i, r_j) = (k, \ell)}[r^{-\set{k, \ell}}]\\
    \geq\\
    (1-\delta)\pr[r\sim\cal{F}_R(x)]{(r_i, r_j) = (\ell, k)}[r^{-\set{k, \ell}}]\]
\end{definition}

Fortunately, this measure of inconsistency allows us to state the following result.

\begin{restatable}{theorem}{deltasc}
       The Price of Anarchy of the Obedience-Constrained Random Serial Dictatorship mechanism, in which firms obtain their rankings from a $(1-\delta)$-Stochastically Consistent advice space, where $\delta \in [0, 1)$, is bounded above by $1 + \frac{1}{(1 - \delta)^2}$.
\end{restatable}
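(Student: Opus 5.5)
The plan is to rerun the two-case argument from the proof of \Cref{thm:PoA}, replacing \Cref{lemma:deviation} with a quantitative version that loses a factor $(1-\delta)^2$. Concretely, I would first prove that for a $(1-\delta)$-stochastically consistent advice space and any two pure profiles $\mbf{s},\mbf{s}^*$,
\[\ex{x_i^{(\beta)}(s_i^*, \mbf{s}_{-i})}[\mbf{x}, \beta, \cal{E}_i] \ge (1-\delta)^2\,\ex{x_i^{(\beta)}(\mbf{s}^*)}[\mbf{x}, \beta, \cal{E}_i],\]
with $\cal{E}_i=\{c_i^{(\beta)}(\mbf{s}^*)\notin C_i^{(\beta)}(\mbf{s})\}$ as before.

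Given this relaxed lemma, I would split the argument according to a threshold $\alpha\in(0,1)$ instead of $1/2$. In Case 1, $\ex[\beta]{\snatched}\ge \alpha\, SW(\mbf{s}^*)$, and then $SW(\mbf{s})\ge \alpha\, SW(\mbf{s}^*)$ exactly as before. In Case 2, $\ex[\beta]{\avail}>(1-\alpha)SW(\mbf{s}^*)$. The chain of inequalities in the original proof then gives
\[SW(\mbf{s})\ge\sum_i u_i(s_i^*,\mbf{s}_{-i})\ge(1-\delta)^2\,\ex[\beta]{\avail}>(1-\delta)^2(1-\alpha)SW(\mbf{s}^*).\]
Balancing the two cases with $\alpha=(1-\delta)^2(1-\alpha)$ yields $\alpha=\frac{(1-\delta)^2}{1+(1-\delta)^2}$, and the ratio is $1/\alpha = 1+\frac{1}{(1-\delta)^2}$, as claimed. (Equivalently, one can write $SW(\mbf{s})\ge\max\{S,(1-\delta)^2(OPT-S)\}$ and take the worst $S$.) The extension to mixed equilibria goes through verbatim, as at the end of the proof of \Cref{thm:PoA}.

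The main obstacle is the relaxed deviation lemma. I would follow the structure I expect the appendix proof of \Cref{lemma:deviation} to have. Condition on $\mbf{x},\beta$, on the samples of all other technologies, and on the event $\cal{E}_i$. Given these, the deviating firm's hire in $(s_i^*,\mbf{s}_{-i})$ is the top-ranked candidate of her sample $\mbf{r}$ outside the set $C=C_i^{(\beta)}(\mbf{s})$. Her hire in $\mbf{s}^*$ is the top of $\mbf{r}$ outside $C^*=C_i^{(\beta)}(\mbf{s}^*)$. One subtlety: when $s_i^*$ is a common technology, $C^*$ itself depends on $\mbf{r}$; here I would condition further on the relevant partial rankings, as the original lemma must.

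On $\cal{E}_i$ the first hire is weakly above the second in $\mbf{r}$. So compare the two candidates $k$ (hired on deviation) and $\ell$ (hired in $\mbf{s}^*$) at their positions $p<q$. Use the pairing $\mbf{r}\leftrightarrow$ ($\mbf{r}$ with positions $p,q$ swapped), conditioning on $r^{-\{k,\ell\}}$. The swap exchanges the roles of $k$ and $\ell$, and the events involved depend only on the sets $C, C^*$ and the positions, so they are preserved or mirrored.

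Under exact consistency, the configuration where the higher-valued candidate occupies the earlier slot is at least as likely, which gives the lemma. Under $(1-\delta)$-consistency, the pair's conditional mass is only lower bounded by a $(1-\delta)$ factor. I expect one factor $(1-\delta)$ to enter when lower bounding the deviation value by the averaged pair value. The second factor enters when converting the conditioning on $\cal{E}_i$, since $\cal{E}_i$ itself depends on the order of the pair and so its probability is distorted by up to $(1-\delta)$.

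Carefully tracking these two distortions is the step I expect to be hardest. If the bookkeeping instead yields only a single factor $(1-\delta)$, the stated bound still follows, since $(1-\delta)\ge(1-\delta)^2$.
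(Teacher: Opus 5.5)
\textbf{The gap is the relaxed deviation inequality.} Your reduction from the theorem to it is exactly the paper's: with $\alpha=\frac{(1-\delta)^2}{1+(1-\delta)^2}$, the snatched/available split and the equilibrium condition give $1+(1-\delta)^{-2}$ once the inequality holds. But the inequality is only sketched, and the sketched route does not work.

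If you condition on the samples of all technologies other than $\mathbf{r}$, then (for private $s_i^*$) $C=C_i^{(\beta)}(\mathbf{s})$ and $C^*=C_i^{(\beta)}(\mathbf{s}^*)$ are fixed sets. Exchanging $k,\ell$ in $\mathbf{r}$ then turns ``deviation hires $k\in C^*$, $\mathbf{s}^*$ hires $\ell$'' into ``both hire $\ell$'', not into its mirror, because $C^*$ still contains $k$ rather than $\ell$. At that level of conditioning the inequality fails even for $\delta=0$; this is the mechanism of Proposition~\ref{ex:deviation}.

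So you must average over the $\mathbf{s}^*$-rankings of the firms ahead of $i$. The mirror map then has to exchange $k,\ell$ in $\mathbf{r}$ and in the ranking of the firm that took $r_1^{-C}$. It must also exchange them in the ranking of every later firm ahead of $i$ that ranked $k$ above its own pick and its pick above $\ell$; otherwise that firm grabs $k$ after the exchange and the trajectory changes. For $\delta=0$ each extra swap has likelihood ratio at least $1$ and costs nothing. Under $(1-\delta)$-consistency each can cost a factor $(1-\delta)$. So the distortion of $\mathcal{E}_i$ is controlled only by $(1-\delta)^t$, with $t$ up to the number of firms ahead of $i$, not by the single factor you assume.

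\textbf{The loss is real, so the per-firm inequality with constant $(1-\delta)^2$ is false in general.} Here is a sketch of an example.
\begin{itemize}
\item Values: $x(k)=1$, $x(\ell)=\varepsilon\to 0$, all others $0$.
\item $N$ firms precede $i$. In $\mathbf{s}$ each uses an independent uniform ranking. In $\mathbf{s}^*$ each uses an independent uniform ranking reweighted by $1/(1-\delta)$ whenever $\ell$ is above $k$; this is $(1-\delta)$-consistent.
\item $s_i^*$ is uniform.
\end{itemize}
Then $C$ is a uniformly random $N$-set. In $\mathbf{s}^*$, however, every firm takes $k$ at only a $\frac{2(1-\delta)}{2-\delta}$ fraction of the fair rate. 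With $c=m-N$, this gives $\Pr[k\notin C^*]\approx (c/m)^{2(1-\delta)/(2-\delta)}\gg c/m$.

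The set above $k$ in a uniform ranking lies inside a fixed set $S$ with $k\notin S$ with probability $1/(m-|S|)$. Using this, a leading-order estimate gives
\[
\frac{\mathbb{E}[x_i^{(\beta)}(s_i^*,\mathbf{s}_{-i})\mid \mathbf{x},\beta,\mathcal{E}_i]}{\mathbb{E}[x_i^{(\beta)}(\mathbf{s}^*)\mid \mathbf{x},\beta,\mathcal{E}_i]}\approx\frac{c}{2c-1}+\frac12\Big(\frac{c}{m}\Big)^{\delta/(2-\delta)} .
\]
This tends to $\frac12$ as $c\to\infty$ with $c^2/m\to 0$. It therefore lies below $(1-\delta)^2$ for every $\delta\in(0,1-1/\sqrt2)$.

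The paper's appendix proof of this lemma uses the same two-ranking factorization: inequality \eqref{eq:eprime_delta}, with all other rankings held fixed. Your plan faithfully tracks it, but the example indicates that this step cannot hold as stated once there are many firms. Along these lines you only get a factor that degrades with the number of firms ahead of $i$. Obtaining $1+(1-\delta)^{-2}$ would need an aggregate argument that avoids a per-firm conditional bound.
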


We prove the theorem in the appendix using a similar technique as our main result (the ``key'' lemma here, analogous to \Cref{lemma:deviation}, is the main obstacle and needs more careful consideration).

%% file: 7conclusion.tex
\section{Conclusion}
\label{sec:conclusion}
    Our work offers a tight bound on the social welfare loss in a hiring market with advice (something that can arise due to algorithmic monoculture or general selfish behavior) and identifies conditions and assumptions one should impose for such settings to be well behaved. The phenomenon of algorithmic monoculture in markets that can be modeled by the setting we investigate seems to not pose a grave risk to the overall welfare. 
    Arguing about how detrimental monoculture is in regards to the price of anarchy or social welfare loss in the models of~\cite{improved_bayes_risk_jagadeesan,monoculture_matching_peng_garg}, as well as other models of algorithmic monoculture, remains an enticing question for future work.

%% file: 8apdx.tex
\appendix
\section{Missing proofs}\label{sec:apdxA}

\lemmaDev*
\begin{proof}\label{lemma:dev/proof}
Consider an ordering of firms $\beta$ and a realization $\mbf{x}$ of the values. 
We will show that this inequality is true for any choice of these. This is obviously true when $\beta(i) = 1$, i.e. when firm $i$ chooses first. 
Thus we assume $\beta(i) > 1$, so then $i$ in both profiles has some non-zero number of firms going before her.

Let's start by assuming $s^*_{i}$ is a private ranking.  
When deviating to choosing $s^*_{i}$ in profile $\mbf{s}$ the firm will get the same candidate as in profile $\mbf{s}^*$ or someone ranked higher (as per the conditioned event), so intuition says she should not do worse in expectation. 
However, this is not true in general if $i$ is up against firms that use arbitrary (not stochastically consistent) technologies, as the event that $i$ has a higher ranked candidate available can be non-trivially correlated with the value of that candidate, see \Cref{ex:deviation} for example.

Condition on the set of candidates that were hired by the firms going before $i$ in profile $\mbf{s}$, e.g. say $C^{(\beta)}_i(\mbf{s}) = C$ for some arbitrary $C$.  Consider the ranking sample that $i$ draws using $s_i^*$, let it be $r$. 
When deviating to this strategy in profile $\mbf{s}$, then, $i$ will select some candidate $r_1^{-C}$. 
Let $\cal{E}_C$ denote the event that the vector of candidate values is $\mbf{x}$, the ordering of firms is $\beta$, the set $C^{(\beta)}_i(\mbf{s})$ is equal to $C$, and $c_i^{(\beta)}(\mbf{s}^*) \not\in C$. 
The event $\cal{E}_C$ can be partitioned into mutually exclusive events $\cal{E}_{j,C}$, where $j$ ranges from 1 to $m - |C|$ and $\cal{E}_{j,C}$ represents the subset of sample points in $\cal{E}_C$ that satisfy $c_i^{(\beta)}(\mbf{s}^*) = r_j^{-C}$. 
Conditional on event $\cal{E}_{1,C}$ we have $c_i^{(\beta)}(s_i^*,\mbf{s}_{-i}) = c_i^{(\beta)}(\mbf{s}^*) = r_1^{-C}$, so $x_i^{(\beta)}(s_i^*, \mbf{s}_{-i}) - x_i^{(\beta)}(\mbf{s}^*) = 0$. 
Accordingly, for the remainder of the proof we focus on the case $j>1$.

Now, notice that conditioning on $C^{(\beta)}_i(\mbf{s}) = C$ does not imply anything about $i$'s ranking $r$, by independence of rankings. 
However, assuming $j>1$ we know that $r_1^{-C}$ was not available in profile $\mbf{s^*}$, which could correlate with its value. 
Consider then whoever got that candidate in profile $\mbf{s}^*$  and let their ranking sample be $f$. 
The idea is that since she is also using a stochastically consistent technology, she will be ``making the right choice'' most of the time. 

Now, we condition on an arbitrary pair of candidates $k, \ell$, so that either $(r_1^{-C}, r_j^{-C}) = (k, \ell)$ or $(r_1^{-C}, r_j^{-C}) = (\ell, k)$. 
If we can show that for any such pair with $x(k) \geq x(\ell)$ the first event is the most likely, then we are done, since:
\begin{gather*}
\ex{x_i^{(\beta)}(s_i^*, \mbf{s}_{-i}) - x_i^{(\beta)}(\mbf{s}^*) \;\bigg\vert\; \mbf{x}, \beta, c_i^{(\beta)}(\mbf{s}^*) 
\notin C_i^{(\beta)}(\mbf{s})}
=\\
\sfrac{1}{2}\cdot\mathbb{E}_{C \sim C_i^{(\beta)}(\mbf{s})}\sum_{j > 1}\sum_{\substack{k, \ell \in [m]\\k\neq\ell}}(x(k) - x(\ell))\pr{\cal{E}_{j,C}}\bigg\{\pr{(r_1^{-C}, r_j^{-C}) = (k, \ell) \mid \cal{E}_{j,C}} \\
- \pr{(r_1^{-C}, r_j^{-C}) = (\ell, k) \mid \cal{E}_{j,C}}\bigg\}
\end{gather*}

Fix (condition) the arbitrary positions that $k$, $\ell$ could be ranked in rankings $r$ and $f$, say $u$ and $u+v$ in $r$ and $z$, $z+w$ in $f$ (note that they both have to ``agree'' on their relative positions by the conditioned event) and condition on everyone else's ranking (including the other positions of $r$ and $f$) for simplicity of the argument.

To complete the proof we make use of the following inequality,
 for any $u < u + v, z < z + w \in [m]$ and for any $k \neq \ell \in [m]$ with $x(k) \geq x(\ell)$, 
 when $\cal{E}'$ is an arbitrary event (consistent with the event $\cal{E}_{j,C}$) 
 that may depend on the rankings of any other technology besides $r$ and $f$, and on all positions 
 of all candidates $c \notin \{k, \ell\}$ in rankings $r$ and $f$.
\begin{equation} \label{eq:eprime}
\pr{\bmat{r_u \\ r_{u+v}} = \bmat{k\\\ell} = \bmat{f_z\\f_{z+w}} \;\bigg\vert\; \cal{E}'} \geq \pr{\bmat{r_u \\ r_{u+v}} = \bmat{\ell\\k} = \bmat{f_z\\f_{z+w}} \;\bigg\vert\; \cal{E}'}
\end{equation}
The inequality is valid since distinct technologies are mutually independent and stochastically consistent. 
The independence implies that the left side factorizes as 
$\pr{\bmat{r_u \\ r_{u+v}} = \bmat{k\\\ell} \;\bigg\vert\; \cal{E}'} \cdot \pr{\bmat{f_z\\f_{z+w}} = \bmat{k \\\ell} \;\bigg\vert\; \cal{E}'}$ and that the right side factorizes similarly. 
Stochastic consistency then implies that each factor on the left side is greater than or equal to the corresponding factor on the right side.

\newcommand{\ekluvzw}{\ensuremath{\cal{E}^{k \ell}_{uvzw}}}
To see that inequality~\eqref{eq:eprime} suffices to complete the proof of the lemma, define $\ekluvzw$
to be the event that $\{k,\ell\} = \{f_z, f_{z+w}\} = \{r_u, r_{u+v}\} = \{r_1^{-C}, r_j^{-C}\}$
(as \emph{unordered} sets) and observe that $\ekluvzw$ can be decomposed as a disjoint union of events $\cal{E}'$ of the type defined above. Hence, by taking a weighted average of inequalities~\eqref{eq:eprime} for all the constituent sets $\cal{E'}$ that compose $\ekluvzw$, we obtain
\begin{equation*} 
\pr{\bmat{r_u \\ r_{u+v}} = \bmat{k\\\ell} = \bmat{f_z\\f_{z+w}} \;\bigg\vert\; \ekluvzw} \geq \pr{\bmat{r_u \\ r_{u+v}} = \bmat{\ell\\k} = \bmat{f_z\\f_{z+w}} \;\bigg\vert\; \ekluvzw} .
\end{equation*}
Now we calculate
\begin{gather*}
    \pr{(r_1^{-C}, r_j^{-C}) = (k, \ell) \mid \cal{E}_{j,C}} - \pr{(r_1^{-C}, r_j^{-C}) = (\ell, k) \mid \cal{E}_{j,C}}\\
    =\sum_{\substack{u, v, z, w \in [m]\\u+v>u\\w+z>w}}\pr{\ekluvzw \mid \cal{E}_{j,C}}\Bigg\{\pr{\bmat{r_u \\ r_{u+v}} = \bmat{k\\\ell} = \bmat{f_z\\f_{z+w}} \;\bigg\vert\; \ekluvzw}\\ \hspace{5cm} - \pr{\bmat{r_u \\ r_{u+v}} = \bmat{\ell\\k} = \bmat{f_z\\f_{z+w}} \;\bigg\vert\; \ekluvzw}\Bigg\}
\end{gather*}
and deduce that the sum is non-negative because each
summand is non-negative.

Finally, to conclude the proof we have to argue about ``common'' strategies, but observe that this proof can also be directly extended for that case too. 
There's only a few small changes, specifically in this case conditioning on $C^{(\beta)}_i(\mbf{s}) = C$ might affect \emph{some} of $i$'s ranking, but crucially, stochastic consistency implies that it won't affect that it's more likely a pair specific pair of candidates in $r^{-C}$ will be ranked correctly rather than not. 
Additionally, we won't be conditioning on everyone else's ranking, but only on those before $i$ in both profiles, except the positions as defined previously. 
This again might affect some of $i$'s ranking, but not the positions we care about (by the event), and so the arguments flow in the following fashion. 
If the firm who took candidate $r_1^{-C}$ (in profile $\mbf{s}$) is using another independent technology, the proof follows the same blueprint as previously. 
If the firm that took the candidate is using the same common technology, then the proof is simpler, as we just have to argue about the common technology's stochastic consistency.  
\end{proof}

\begin{proposition}\label{ex:deviation}
    \Cref{lemma:deviation} does not necessarily hold if firm $i$ is using a stochastically consistent technology, but not her competitors.
\end{proposition}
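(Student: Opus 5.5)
The plan is to build an explicit small counterexample. Firm $i$ will use a perfectly accurate technology, which is trivially stochastically consistent. A competitor, placed earlier in the order under $\mbf{s}$, will use a technology whose ranking is correlated with the values in a perverse way. The goal is that, on the event that $i$'s optimal-profile candidate is still available, a higher-ranked candidate being free becomes a signal that this candidate is \emph{bad}. The conditional expectation in \Cref{lemma:deviation} then comes out negative.

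The tension is that a perfectly accurate $r$ is deterministic given $\mbf{x}$. So if $c_i^{(\beta)}(\mbf{s}^*) \notin C$, then $i$'s deviation yields $r_1^{-C}$, which has value at least that of any other available candidate. The lemma then holds trivially. The counterexample must therefore give $i$ a noisy but still stochastically consistent technology. I will take $r$ uniformly random: a uniform ranking satisfies \Cref{def:sc} with equality. Since the lemma conditions on $\mbf{x}$, all remaining randomness comes from the rankings, so the correlation has to be engineered entirely through the competitors' rankings.

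Concretely, I use two firms, $n=2$, and fix $\beta$ so that firm $j$ goes first and $i$ second. I take three candidates with values $x(a)=1$ and $x(b)=x(c)=0$; value ties are harmless. Let $s_i=s_i^*$ be a uniform private ranking. In $\mbf{s}^*$, firm $j$ uses a technology that always ranks $a$ first, which is consistent. In $\mbf{s}$, firm $j$ uses a non-consistent technology that always ranks $b$ first, so $C=\{b\}$.

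Under $\mbf{s}^*$, firm $i$ gets the top of $r$ among $\{b,c\}$, which has value $0$. This candidate is outside $C$ exactly when $c$ precedes $b$ in $r$. Condition on that event. Under the deviation, $i$ gets the top of $r$ among $\{a,c\}$, which has value $\ge 0$. This example therefore gives a nonnegative difference and fails. The roles have to be reversed: the high-value candidate must be the one snatched in $\mbf{s}^*$ and available in $\mbf{s}$, while the candidate $i$ gets in $\mbf{s}^*$ should be good, and the extra option she gains in $\mbf{s}$ should be bad.

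So I set it up as follows. In $\mbf{s}^*$, firm $j$ always takes the low-value candidate $b$. In $\mbf{s}$, firm $j$ uses a technology that picks candidates depending on $r$-independent noise, chosen so that the event $\{c_i(\mbf{s}^*)\notin C\}$ holds mostly in realizations where $i$'s top available pick is $b$. With $x(a)=1$ and $x(b)=x(c)=0$, take $s_j$ to always pick $a$. Then $C=\{a\}$ and $c_i(\mbf{s}^*)=\mathrm{top}_r\{a,c\}$. The event holds iff $c$ precedes $a$ in $r$, in which case $c_i(\mbf{s}^*)=c$, of value $0$. The deviation gives $\mathrm{top}_r\{b,c\}$, also of value $0$, so the difference is zero.

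To obtain a strict violation, I will use four candidates, for example values $1,1,0,0$, and tune $s_j$ (inconsistent) and $s_j^*$ so that the conditioning forces the deviation onto a $0$ while $\mbf{s}^*$ yields a $1$ with positive probability. I expect this tuning to be the main obstacle, and I will search it by hand over small cases.
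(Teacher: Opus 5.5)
Your setup (firm $i$ on a uniform ranking, which is stochastically consistent with equality, against inconsistent competitors) is the right kind of counterexample and is essentially the paper's. However, the proposal never exhibits one. Both concrete attempts fail, as you note, and the last step is deferred to an unspecified ``tuning'' over four candidates.

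The gap is in your diagnosis. Write $C$ and $C^*$ for the sets taken by the competitors before $i$ in $\mathbf{s}$ and $\mathbf{s}^*$, and let $I=[m]\setminus(C\cup C^*)$ and $D=C^*\setminus C$. On the conditioning event, $i$'s pick in $\mathbf{s}^*$ is the top of $r$ among $I$. The deviation changes her outcome only when the top of $r$ among $I\cup D$ lies in $D$. For uniform $r$ and deterministic competitors, symmetry within $I$ and within $D$ gives the conditional expected difference $q(\bar{x}_D-\bar{x}_I)$. Here $\bar{x}$ is the average value and $q>0$ (when $D\neq\emptyset$) is the conditional probability that the deviation lands in $D$.

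So you need $D$ (the extra options, taken in $\mathbf{s}^*$ but free in $\mathbf{s}$) to be bad, and $I$ (left by both competitors) to contain the good candidate. Your sentence ``the high-value candidate must be the one snatched in $\mathbf{s}^*$ and available in $\mathbf{s}$'' puts it in $D$, which contradicts the very next clause. Your second attempt fails only because $s_j$ takes $a$, which removes the good candidate from $I$.

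The fix needs neither a fourth candidate nor competitor noise. Keep $x(a)=1$ and $x(b)=x(c)=0$, and let $s_j^*$ always take $b$ and $s_j$ always take $c$.
\begin{itemize}
\item The conditioning event is that $a$ precedes $c$ in $r$, so $x_i^{(\beta)}(\mathbf{s}^*)=1$.
\item The deviation gets the top of $r$ among $\{a,b\}$. This is $b$ exactly in the ordering $bac$, which has conditional probability $(1/6)/(1/2)=1/3$.
\item Hence the conditional expectation in the lemma is $2/3-1=-1/3<0$.
\end{itemize}
The harmful correlation comes from $i$'s own ranking: knowing that $a$ beats $c$ says nothing about whether $b$ beats $a$. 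It does not come from randomness in the competitors' rankings, so your remark that it must be ``engineered entirely through the competitors' rankings'' steered the search the wrong way. The paper's example uses the same mechanism with a Mallows ranking for $i$, where $I=\{1\}$ is the best candidate and $D=\{m-1,m\}$ are the two worst.
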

\begin{proof}
    Assume the following for simplicity of the argument: $m = n+3$, $\mbf{x}$ is such that $x(i), i \in [m]$ is a decreasing sequence and $\beta$ is such that $i$ is last to choose ($\beta(i) = n$). 
    Now, consider the following about the profiles $\mbf{s}_{-i}$ and $\mbf{s}^*_{-i}$: the former is such, that they always choose the best candidates, but never identify $1$ to be such, meaning they will always hire the candidates of $[m]\setminus\{1, m-1, m\}$, leaving the rest for $i$. 
    The latter, profile $\mbf{s}_{-i}^*$, is not very well informed: they always choose the \emph{worst} candidates, leaving $\{1, 2, 3\}$ for $i$ to choose. 
    Now, suppose $i$ is employing a stochastically consistent ranking technology in $\mbf{s}^*$, such as the Mallow's model with the Kendall-Tau distance. 
    Her utility in profile $\mbf{s}^*$ conditioned on the event of \Cref{lemma:deviation} is just $x(1)$. This is because the event is only true when she hires candidate $1$, as candidates $2$ and $3$ are never available in $\mbf{s}$. 
    What is her conditional utility in profile $\mbf{s}$ if she were to deviate to $s_i^*$? 
    It must be less than $x(1)$, because there is a non zero probability she ranked $m$ or $m-1$ above $1$ (even though in profile $\mbf{s}^*$ she would hire candidate $1$).
\end{proof}

\begin{theorem}\label{thm:smooth}
    The Obedience-Constrained mechanism with a stochastically consistent advice space is $(1,1)$-smooth.
\end{theorem}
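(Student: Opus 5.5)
Recall that a game is $(\lambda,\mu)$-smooth in the sense of \citet{intrinsic_robustness} if for every pair of pure profiles $\mbf{s},\mbf{s}^*$ we have
\[\sum_{i\in[n]} u_i(s_i^*,\mbf{s}_{-i}) \;\geq\; \lambda\, SW(\mbf{s}^*) - \mu\, SW(\mbf{s}).\]
The proof will therefore establish, for arbitrary pure $\mbf{s},\mbf{s}^*$ (no equilibrium or optimality assumptions), the inequality
\[\sum_{i\in[n]} u_i(s_i^*,\mbf{s}_{-i}) + SW(\mbf{s}) \;\geq\; SW(\mbf{s}^*).\]
This is exactly the two cases of the proof of \Cref{thm:PoA} merged into one additive statement, rather than a case split.

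First, fix an RSD order $\beta$ and decompose the optimal-profile welfare conditioned on $\beta$:
\[\ex{\textstyle\sum_i x_i^{(\beta)}(\mbf{s}^*)}[\beta] = \snatched + \avail.\]
This uses only the partition by the events $\cal{E}_i$ and $\cal{E}_i^c$. Taking expectation over $\beta$ gives
\[SW(\mbf{s}^*) = \ex[\beta]{\snatched} + \ex[\beta]{\avail}.\]

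Second, bound the snatched term by $SW(\mbf{s})$. On the event $c_i^{(\beta)}(\mbf{s}^*)\in C_i^{(\beta)}(\mbf{s})$, the candidate $c_i^{(\beta)}(\mbf{s}^*)$ is hired in profile $\mbf{s}$ by some firm preceding $i$. Since each candidate is hired by at most one firm in $\mbf{s}^*$, distinct $i$ yield distinct candidates. Hence, pointwise, the snatched sum is at most the total value hired in $\mbf{s}$, using nonnegative values. Taking expectations gives
\[\ex[\beta]{\snatched}\le SW(\mbf{s}).\]

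Third, bound the available term by the deviation utilities. We copy the chain of inequalities from Case 2 of the proof of \Cref{thm:PoA}, which never used the equilibrium property. First drop the complement of $\cal{E}_i$ using nonnegativity. Then condition on $\mbf{x},\beta,\cal{E}_i$ and apply \Cref{lemma:deviation} to replace $x_i^{(\beta)}(s_i^*,\mbf{s}_{-i})$ by $x_i^{(\beta)}(\mbf{s}^*)$. This gives
\[\sum_i u_i(s_i^*,\mbf{s}_{-i})\ge \ex[\beta]{\avail}.\]
Adding the second and third steps yields the smoothness inequality with $\lambda=\mu=1$.

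The only substantive ingredient is \Cref{lemma:deviation}, which is where stochastic consistency enters and which we take as given. The main point requiring care is the second step: the injectivity argument must be made correctly, namely that snatched candidates are hired in $\mbf{s}$ and are counted at most once. Once the inequality holds for every pure pair, the standard extension theorem of \citet{intrinsic_robustness} gives a bound of $\lambda/(1+\mu)=1/2$ on the fraction of optimal welfare achieved, i.e.\ price of anarchy $2$, for mixed and (coarse) correlated equilibria and for no-regret sequences.
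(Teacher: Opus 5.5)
Your proposal is correct and follows the paper's own proof of this theorem essentially step for step. The paper likewise splits $SW(\mbf{s}^*)$ into the expected snatched and available parts. It lower-bounds the total deviation utility by the available part via \Cref{lemma:deviation}, and upper-bounds the snatched part by $SW(\mbf{s})$ because those candidates are a subset of the ones hired in $\mbf{s}$; your explicit injectivity argument for that last step just spells out what the paper states in one line.
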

\begin{proof}
    Consider any two profiles $\mbf{s}, \mbf{t}$. We have the following inequalities:
    
    \begin{align*}
        \sum_{i\in[n]}u_i(t_i, \mbf{s}_{-i}) &\ge \ex[\beta]{W^{(\beta)}_{\mathrm{available}}(t, s)} \tag{by \Cref{lemma:deviation}}\\
        &= SW(t) - \ex[\beta]{W^{(\beta)}_{\mathrm{snatched}}(t, s)}\\
        &\ge SW(t) - SW(s)
    \end{align*}

    The second line follows from the definition of $W^{(\beta)}_{\mathrm{snatched}}(t, s)$ and $W^{(\beta)}_{\mathrm{available}}(t, s)$, as: \[\ex[\beta]{W^{(\beta)}_{\mathrm{snatched}}(t, s)} + \ex[\beta]{W^{(\beta)}_{\mathrm{available}}(t, s)} = SW(t)\] Similarly, the third line follows from observing that $SW(s) \ge \ex[\beta]{W^{(\beta)}_{\mathrm{snatched}}(t, s)}$.
\end{proof}

\propdist*
\begin{proof}\label{prop:cvx_sum}
Let $d$ be the distance function in question. Without loss of generality, it suffices to prove that for any arbitrary $k < \ell$, any permutation $\pi$ with $i = \pi^{-1}(k) < j = \pi^{-1}(\ell)$ and the permutation $\pi'$ which arises from $\pi$ by swapping the position of $k$ and $\ell$, the following holds:
\[d(\pi, (1, \hdots, m)) \le d(\pi', (1, \hdots, m))\]

Observe that by the form of the distance function $d$ and how $\pi$ and $\pi'$ relate:
\begin{align*}
    d(\pi, (1, \hdots, m)) - d(\pi', (1, \hdots, m)) &= g(|\pi^{-1}(k) - k|) + g(|\pi^{-1}(\ell) - \ell|)\\
    &\;\;\;\;\;\;\;\;\;\;\;- \l(g(|\pi^{-1}(\ell) - k|) + g(|\pi^{-1}(k) - \ell|)\r)\\
    &= g(|i - k|) - g(|i - \ell|) -\l(g(|j - \ell| - g(|j - k|)\r)
\end{align*}

We define $f : x \mapsto g(|x - k|) - g(|x - \ell|)$, with our goal being to show that $f$ is non-decreasing. For ease of argument we assume $g$ is differentiable, but the same follows for a general non-differentiable $g$. Suppose $x \ge \ell > k$, we have then that $f(x) = g(x - k) - g(x - \ell)$ and so $f'(x) \ge 0$, as $g$ is convex. Similarly, for $x \le k$ we have $f(x) = g(k - x) - g(\ell - x)$ and so $f'(x) = g'(\ell - x) - g'(k - x) \ge 0$, again by convexity of $g$. Now, let $x \in (k, \ell)$, we have now $f(x) = g(x - k) - g(\ell - x)$ and $f'(x) = g'(x-k) + g'(\ell - x) \ge 0$, since $g$ is non-decreasing. Further, since $g$ is convex it must also be continuous (which implies continuity of $f$) and thus $f$ must be non-decreasing in its entire domain. 

The claim follows from observing that $d(\pi, (1, \hdots, m)) - d(\pi', (1, \hdots, m)) = f(i) - f(j)$ and since $i < j$ and $f$ is non-decreasing, it is $d(\pi, (1, \hdots, m)) - d(\pi', (1, \hdots, m)) \le 0$, as needed.
\end{proof}

\thmSchur*
\begin{proof}\label{thm/proof}
   Let $f: \R^m \to \R$ be the pdf of the additive noise vector. 
   We fix arbitrary $i < j$ and $k, \ell \in [m]$ and condition on the candidate value vector, where without loss of generality we assume $x(k) \ge x(\ell)$.
    
    Let $r$ be the random ranking of the candidates induced by the additive noise scoring procedure and let $\cal{E}_{ij}(k,\ell,\mbf{c})$ denote the event that $\{(r_i^{-C}, r_j^{-C}) = (k, \ell)\}$ and $r^{-\set{k, \ell}} = \mbf{c}$.
    Further, let $M_{ij}$ be the set of joint noise vectors $\bm{\eps}$ that realize event $\cal{E}_{ij}(k,\ell,\mbf{c})$. 
    Define $\cal{E}_{ij}(k,\ell,\mbf{c})$ and $M_{ji}$ accordingly. 
    If we show that for all $\mbf{c}$ the following difference is non-negative we are done:
    \begin{align*}
        \pr{\cal{E}_{ij}(k, \ell, \mbf{c})} - \pr{\cal{E}_{ji}(k, \ell, \mbf{c})} &= \int_{M_{ij}}f(\bm{\eps}) \, d\bm{\eps} - \int_{M_{ji}}f(\bm{\eps}) \, d\bm{\eps}
    \end{align*}
    
    Let $\delta_{k\ell} = x(k)- x(\ell) \geq 0$ and define the below bijection:
    
    \[\xi : \R^m \ni \bm{y} \mapsto \bm{y} + \delta_{k\ell}(\bm{e}_k - \bm{e}_\ell) \in \R^m\]

    Further let $B$ be the matrix that swaps coordinates $k, \ell$ and define the bijection $t = \xi\circ B$. 
    Observe that $t$ maps elements of $M_{ij}$ to $M_{ji}$ and vice-versa: rankings remain the same, except items $k$ and $\ell$ are swapped. 
    To see this, let $\bm{\eps} \in M_{ij}(\mbf{c})$, with $\eps(m)$ the additive noise to candidate $m$. 
    After the transformation the noise vector remains identical, except the $k$-th and $\ell$-th component that would now be $\eps(\ell) - \delta_{k\ell}$ and $\eps(k) + \delta_{k\ell}$, resulting in scores $x(\ell) + \eps(\ell)$ and $x(k) + \eps(k)$ for items $k$ and $\ell$ respectively (the scores of the rest of the candidates remain unchanged). 
    Thus, the only difference in this new ranking is that $k$ and $\ell$ are swapped. 
    Now, clearly $t$ is Lebesgue measure preserving and since it is a bijection between $M_{ij}$ and $M_{ji}$ it's the case that: $t^{-1}(M_{ji}) = M_{ij}$.

    We have then, that:
    \begin{align}\label{inter}
        \pr{\cal{E}_{ij}(k, \ell, \mbf{c})} - \pr{\cal{E}_{ji}(k, \ell, \mbf{c})} &= \int_{M_{ij}}f(\bm{\eps}) \, d\bm{\eps} - \int_{t^{-1}(M_{ji})}f(t(\bm{\eps})) \, d\bm{\eps}\nonumber\\
        &= \int_{M_{ij}}\big(f(\bm{\eps}) -f(t(\bm{\eps})) \big) \, d\bm{\eps}
    \end{align}

    Now, observe that if $\bm{z} = t(\bm{\eps})$, then $z(r) = \eps(r)$ for all $r \in [m]\setminus\{i, j\}$, but $z(k) = \eps(\ell) - \delta_{k\ell}$ and $z(\ell) = \eps(k) + \delta_{k\ell}$, as we noted before. 
    Recall then that, since $\bm{\eps} \in M_{ij}$ we must have that $k$ is ranked above $\ell$:
    \[x(k) + \eps(k) \geq x(\ell) + \eps(\ell)\] 

    Which implies:
    \[\eps(\ell) - \delta_{k\ell} \leq \min\{\eps(k), \eps(\ell)\}.\] 

    By the same observation we also conclude that:
    \[\eps(k) + \delta_{k\ell} \geq \max\{\eps(k), \eps(\ell)\}\]
    
    Thus, $t(\bm{\eps})$ majorizes $\bm{\eps}$ and by Schur-concavity of $f$ we have that the intergrands in \Cref{inter} are always non-negative, leading to $\pr{\cal{E}_{ij}(k, \ell, \mbf{c})} - \pr{\cal{E}_{ji}(k, \ell, \mbf{c})} \ge 0$, which implies what's desired.
\end{proof}

Here, we present an equivalent definition for \hyperref[def:sc]{Stochastic Consistency}, which will be useful in some of our proofs.
\begin{proposition}
        A ranking technology $R$ is \emph{stochastically consistent} iff for any two subsets $M$ and $C$ of $[m]$ with cardinality $q$ and any enumeration $\{m_i\}_{i \in [q]}$, $\{c_i\}_{i \in [q]}$ of them, for any realization $\mbf{x}$ of the joint value vector, the following holds for all $i, j \notin M$, $k, \ell \notin C$ with $(i-j)(x(k)-x(\ell)) \leq 0$:
    \begin{gather*}\pr[r\sim \cal{F}_R(x)]{(r_i, r_j) = (k, \ell) \mid (r_{m_1}, \hdots, r_{m_q}) = (c_1, \hdots, c_{q})}\\
    \geq\\
    \pr[r\sim\cal{F}_R(x)]{(r_i, r_j) = (\ell, k)}[(r_{m_1}, \hdots, r_{m_q}) = (c_1, \hdots, c_q)],
    \end{gather*}
    (whenever the conditioning event has non-zero probability).
\end{proposition}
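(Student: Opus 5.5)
The plan is to show that both the original definition (\Cref{def:sc}) and the new one are equivalent to a single pointwise statement about the probabilities of full permutations. Fix a realization $\mbf{x}$; all probabilities below are under $\cal{F}_R(\mbf{x})$. Since the space of rankings is finite, ``almost surely'' in \Cref{def:sc} just means ``whenever the conditioning event has positive probability''. For a permutation $\pi$ and positions $i\neq j$, write $\pi^{(ij)}$ for $\pi$ with the entries in positions $i,j$ swapped. The pointwise statement is
\[(\star)\qquad \pr{r=\pi}\ \ge\ \pr{r=\pi^{(ij)}}\quad\text{for all } \pi,\ i<j \text{ with } x(\pi_i)\ge x(\pi_j).\]

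First, I would show that \Cref{def:sc} is equivalent to $(\star)$. Fix $i<j$ and $k,\ell$ with $x(k)\ge x(\ell)$. Fix also a partial ranking $\sigma$ of $[m]\setminus\{k,\ell\}$ with $\pr{r^{-\set{k,\ell}}=\sigma}>0$. The event $\{(r_i,r_j)=(k,\ell),\ r^{-\set{k,\ell}}=\sigma\}$ is a single permutation $\pi$: insert $k,\ell$ at positions $i,j$ and fill the remaining positions with $\sigma$ in order. Likewise $\{(r_i,r_j)=(\ell,k),\ r^{-\set{k,\ell}}=\sigma\}=\{\pi^{(ij)}\}$. Both conditional probabilities in \Cref{def:sc} have the same denominator $\pr{r^{-\set{k,\ell}}=\sigma}$, so the inequality there is exactly $\pr{r=\pi}\ge\pr{r=\pi^{(ij)}}$. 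As $i,j,k,\ell,\sigma$ range over all choices, $\pi$ ranges over all permutations with $\pi_i=k,\pi_j=\ell$. If $\pr{r^{-\set{k,\ell}}=\sigma}=0$, both sides of $(\star)$ are $0$. Hence \Cref{def:sc} is equivalent to $(\star)$.

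Next, I would show that $(\star)$ implies the new condition. Take $M,C$ with enumerations, and $i,j\notin M$, $k,\ell\notin C$ with $(i-j)(x(k)-x(\ell))\le 0$. Let $F=\{(r_{m_1},\dots,r_{m_q})=(c_1,\dots,c_q)\}$. If $i<j$, then $x(k)\ge x(\ell)$. If $i>j$, then $x(\ell)\ge x(k)$; in that case I read $(\star)$ with positions $j<i$ holding $\ell,k$ respectively, which gives exactly the desired orientation. (When $x(k)=x(\ell)$ this yields equality.) Let $P$ be the set of permutations in $F$ with $(\pi_i,\pi_j)=(k,\ell)$. Because $i,j\notin M$ and $k,\ell\notin C$, the map $\pi\mapsto\pi^{(ij)}$ preserves membership in $F$. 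So this map is a bijection from $P$ onto the set of permutations in $F$ with $(\pi_i,\pi_j)=(\ell,k)$. Summing $(\star)$ over $P$ and dividing by $\pr{F}>0$ gives the claimed inequality.

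For the converse, I would specialize the new condition to $M=[m]\setminus\{i,j\}$ with $q=m-2$, and let $C$ be the set of candidates of $\pi$ outside positions $i,j$, enumerated as $\pi$ places them. Then $F=\{\pi,\pi^{(ij)}\}$ and the conditional inequality is exactly $(\star)$. If $\pr{F}=0$, $(\star)$ holds trivially as $0\ge 0$. The only delicate part is this bookkeeping: verifying that conditioning on $r^{-\set{k,\ell}}$ together with the pair occupying positions $\{i,j\}$ pins down a full permutation, and handling the sign condition $(i-j)(x(k)-x(\ell))\le0$ by relabeling. Everything else is a finite sum over permutations.
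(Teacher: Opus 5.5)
Your proof is correct and follows essentially the paper's route: one direction by specializing the conditioning event, the other by the law of total probability. You additionally make explicit the pointwise reduction $(\star)$ to single permutations, the bijection $\pi\mapsto\pi^{(ij)}$, and the handling of the case $i>j$ via the sign condition, all of which the paper's one-line argument leaves implicit.
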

\begin{proof}
It's clear that this definition implies the original. The same is true for the inverse implication: we get the inequality for any relevant subset and their enumeration, using the law of total probability.
\end{proof}

\vspace{0.5\baselineskip}\propIC*\vspace{-0.24cm}
\begin{proof}\label{prop/proof}
    We will show that no firm has any incentive to strategize and ask for a candidate other than the highest ranked that is available from her sample, even if she knew not just the candidates hired before her, but also the technologies used. 
    Suppose then she has a strategy to permute her sampled ranking, based on what she witnessed (technologies used and candidates hired) and her ranking sample.

    Let $\mbf{s} := (s_1, \hdots, s_k)$ for some $k\geq0$ denote the arbitrary strategies of the firms going before her in order and let $D(\mbf{s})$ denote the (vector of) candidates that each hired. 
    First, consider the event where the firms hire some arbitrary candidates in that order, i.e. say $D(\mbf{s}) = (c_1, \hdots, c_k)$. 
    We claim that the joint distribution of the value of candidates $C' = [m] \setminus \{c_1,\hdots,c_k\}$ conditioned on $D(\mbf{s}) = (c_1, \hdots, c_k)$ remains permutation invariant. This is because the event $D(\mbf{s}) = (c_1, \hdots, c_k)$ is (trivially) \emph{symmetric} with respect to these candidates and so permutation invariance of the original joint value vector, shall imply permutation invariance for the joint value vector of the candidates in $C'$ too, conditioned on this event.

    This shows that when a firm witnesses the event $D(\mbf{s}) = (c_1, \hdots, c_k)$ she doesn't have any incentive to strategize with respect to the identities of the candidates she sees in the partial ranking $r^{-D}$. 
    Thus, her strategy can as well depend only on the event $D(\mbf{s}) = (c_1, \hdots, c_k)$ by possibly choosing some later \emph{position} in $r^{-D}$ than the first (and shall not depend on the sample $r$ she sees). 
    Say then, whenever she witnesses $D(\mbf{s}) = (c_1, \hdots, c_k)$ she chooses the $q$-th ranked candidate from her list of available candidates for some $q > 1$, ie she seeks to hire $r_q^{-D}$ instead of $r_1^{-D}$. 
    The event $D(\mbf{s}) = (c_1, \hdots, c_k)$ implies that for some $m \geq 0$ and some suitable indices $\ell_i$ we have that $(r_1, \hdots, r_m) = (c_{\ell_1}, \hdots, c_{\ell_m})$ ($m > 0$ implies that this ranking is sampled from an algorithmic vendor and $\mbf{s}$ has some firms employing the same advice). But now notice the following, because of the independence of distinct technologies:
    \[\ex{x(r_1^{-D}) - x(r_q^{-D}) \mid D(\mbf{s}) = D} = \ex{x(r_1^{-D}) - x(r_q^{-D}) \mid (r_1, \hdots, r_m) = (c_{\ell_1}, \hdots, c_{\ell_m})}\]
    And now it's not hard to see that stochastic consistency ensures that:
    \[\ex{x(r_1^{-D}) - x(r_q^{-D}) \mid (r_1, \hdots, r_m) = (c_{\ell_1}, \hdots, c_{\ell_m})} \geq 0\]

    This is because:
    \begin{gather*}
        \ex{x(r_1^{-D}) - x(r_q^{-D}) \mid (r_1, \hdots, r_m) = (c_{\ell_1}, \hdots, c_{\ell_m})}\\
        = \frac{1}{2}\cdot\mathbb{E}\sum_{\substack{k, \ell \in C'\\ k \neq \ell}}\sum_{\substack{i, j \in[m]\\ i < j}}(x(k) - x(\ell))\cdot\Big\{\pr{(r_i, r_j)=(k, \ell) \mid \cal{E}_D \land \cal{E}} \\
        \hspace{5cm}- \pr{(r_i, r_j)=(\ell, k) \mid \cal{E}_D \land \cal{E}}\Big\}
    \end{gather*}
    Where $\cal{E}_D$ represents the (random) rankings of the candidates in $D$, that are consistent with $\{r_1^{-D}, r_q^{-D}\} = \{k, \ell\} = \{r_i, r_j\}$ and $\cal{E}$, which is the event $(r_1, \hdots, r_m) = (c_{\ell_1}, \hdots, c_{\ell_m})$. 
    Since the statement of stochastic consistency implies that for any arbitrary conditioning on the ranking of a set of candidates (disjoint from those of interest), the above quantity must be non-negative.\\

    This concludes the proof (as the conditioning $D(S) = D$ was arbitrary).
\end{proof}

\begin{lemma}\label{lemma:delta_dev}
Consider an Obedience-Constrained Random Serial Dictatorship mechanism with a \ssc advice space, for $\delta \in [0,1)$. Let $c_i^{(\beta)}(\mbf{s})$ denote the candidate hired  by firm $i$ in profile $\mbf{s}$ when the order of firms is $\beta$. Let also $x_i^{(\beta)}(\mbf{s}) := x(c_i^{(\beta)}(\mbf{s}))$ be the value of this candidate. Further, define $C_i^{(\beta)}(\mbf{s})$ to be the set of candidates hired by the firms going before $i$ in profile $s$.\\

For any two pure strategy profiles $s$, $s^*$ the below holds:
\[\ex{x_i^{(\beta)}(s_i^*, \mbf{s}_{-i}) - (1-\delta)^2x_i^{(\beta)}(\mbf{s}^*) \;\bigg\vert\; \mbf{x}, \beta, c_i^{(\beta)}(\mbf{s}^*) \notin C_i^{(\beta)}(\mbf{s})} \geq 0\]
where the expectation is over the randomness of the samples drawn from the ranking technologies and $\mbf{x}$ represents the joint value vector of the candidates.
\end{lemma}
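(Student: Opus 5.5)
We follow the structure of the proof of \Cref{lemma:deviation}, tracking where the factor $(1-\delta)$ enters. Fix $\mbf{x}$ and $\beta$ with $\beta(i)>1$; the case $\beta(i)=1$ is immediate, since then both profiles give firm $i$ the same candidate and $1\ge(1-\delta)^2$ with nonnegative values. As before, condition on $C_i^{(\beta)}(\mbf{s})=C$ and split the event $\cal{E}_C$ into the pieces $\cal{E}_{j,C}=\{c_i^{(\beta)}(\mbf{s}^*)=r_j^{-C}\}$, where $r$ is the sample firm $i$ draws from $s_i^*$. On $\cal{E}_{1,C}$ the deviating firm hires exactly $c_i^{(\beta)}(\mbf{s}^*)$, so the integrand $x_i^{(\beta)}(s_i^*,\mbf{s}_{-i})-(1-\delta)^2x_i^{(\beta)}(\mbf{s}^*)$ is nonnegative pointwise. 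It therefore suffices to show, for each $j>1$,
\[
\ex{x(r_1^{-C}) \mid \cal{E}_{j,C}} \ \ge\ (1-\delta)^2\,\ex{x(r_j^{-C}) \mid \cal{E}_{j,C}}.
\]

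Unlike the exact case, we cannot work with the antisymmetric difference $x(k)-x(\ell)$ directly. Instead we pair the two orderings of each unordered pair $\{k,\ell\}$ with $x(k)\ge x(\ell)$. Write $p=\pr{(r_1^{-C},r_j^{-C})=(k,\ell)\mid\cdot}$ and $q=\pr{(r_1^{-C},r_j^{-C})=(\ell,k)\mid\cdot}$. The contribution of this pair to the desired inequality is
\[
p\,x(k)+q\,x(\ell)-(1-\delta)^2\bigl(p\,x(\ell)+q\,x(k)\bigr).
\]
This is nonnegative whenever $p\ge(1-\delta)^2 q$, because then $p\,x(k)\ge(1-\delta)^2q\,x(k)$ and $q\,x(\ell)\ge 0$, while $p\,x(\ell)\le p\,x(k)$ handles the remaining term. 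Concretely, $p x(k)-(1-\delta)^2 q x(k)\ge0$ and $q x(\ell)-(1-\delta)^2 p x(\ell)\ge q x(\ell)-p x(\ell)$. That last expression may be negative, so the pairing needs care. The cleaner route is to compare the total $p x(k)+q x(\ell)$ with $(1-\delta)^2(p+q)$ times the average of $x(k)$ and $x(\ell)$ weighted appropriately. I will check this algebra carefully, possibly using the bound $p x(k)+q x(\ell)\ge (1-\delta)^2(p x(\ell)+q x(k))$, which is equivalent to $(p-(1-\delta)^2q)x(k)+(q-(1-\delta)^2p)x(\ell)\ge0$. Since $x(k)\ge x(\ell)\ge0$, this holds as soon as $p\ge(1-\delta)^2q$, because the two coefficients sum to $(p+q)(1-(1-\delta)^2)\ge0$ and the larger weight sits on the larger value. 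So the pairwise goal is $p\ge(1-\delta)^2 q$.

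To get $p\ge(1-\delta)^2q$, we use the refined conditioning from the proof of \Cref{lemma:deviation}. Condition on the positions $u<u+v$ of $\{k,\ell\}$ in $r$, on the positions $z<z+w$ in the ranking $f$ of the firm that hires $r_1^{-C}$ in $\mbf{s}^*$, and on an event $\cal{E}'$ fixing everything else. By independence, the analogue of \eqref{eq:eprime} factorizes into one factor for $r$ and one for $f$. Applying \Cref{def:delta_sc} to each factor gives a loss of $(1-\delta)$ per factor, so the ratio is at least $(1-\delta)^2$. This is exactly where the square comes from. Averaging over the disjoint events $\cal{E}'$ making up $\cal{E}^{k\ell}_{uvzw}$, and then over $(u,v,z,w)$, preserves a multiplicative inequality of the form $P\ge(1-\delta)^2Q$ because it is linear in the conditional probabilities. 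This yields $p\ge(1-\delta)^2q$ conditional on $\cal{E}_{j,C}$.

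The main obstacle is the common-technology case. If firm $i$'s $s_i^*$ coincides with the technology of the snatching firm, then $r$ and $f$ are the same sample. In that case only one factor of $(1-\delta)$ appears, which is even better. If $s_i^*$ is common and shared with earlier firms in $\mbf{s}$, conditioning on $C$ affects $r$. Here the equivalent conditional form of stochastic consistency (the proposition in the appendix, which transfers verbatim to the $\delta$-relaxation via total probability) ensures the pairwise ratio bound survives arbitrary conditioning on other positions. I expect verifying that this conditioning never touches the positions of $k$ and $\ell$, so that at most two $(1-\delta)$ factors arise, to be the delicate step. Summing the nonnegative pairwise contributions over pairs, over $j>1$, and over $C$ then gives the lemma.
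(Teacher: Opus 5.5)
Your proposal follows the paper's proof essentially step for step: the same reduction to the events $\mathcal{E}_{j,C}$ with $j>1$, the same bound $p\ge(1-\delta)^2q$ obtained from independence of $r$ and $f$ with one factor of $1-\delta$ from each, and the same averaging over the events $\mathcal{E}'$. The only differences are that you check pair by pair that $p\ge(1-\delta)^2q$ gives $p\,x(k)+q\,x(\ell)\ge(1-\delta)^2\bigl(p\,x(\ell)+q\,x(k)\bigr)$, whereas the paper reaches the same inequality in aggregate (bounding the difference by $\bigl((1-\delta)^2-1\bigr)\sum(x(k)-x(\ell))P^{j,C}_{\ell,k}$, splitting it, and discarding a nonnegative term), and that the point you flag as delicate, namely that nothing else in the conditioning depends on the relative order of $k$ and $\ell$ so only two factors of $1-\delta$ arise, is left just as informal in the paper and is not confined to common technologies, since a later firm in $\mathbf{s}^*$ can change its pick once $f$'s choice switches between $k$ and $\ell$.
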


\begin{proof}
    Let $\cal{E}$ be the event that the joint value vector is $\mbf{x}$, the order is $\beta$ and $c_i^{(\beta)}(\mbf{s}^*) \notin C_i^{(\beta)}(\mbf{s})$. 
    
    Now, similar to the \hyperref[lemma:dev/proof]{proof} of Lemma 4.3, 
    define the event $\cal{E}_{j,C}$ (where $j$ ranges from 1 to $m - |C|$) to be such that $C^{(\beta)}_i(\mbf{s}) = C$ and $c_i^{(\beta)}(\mbf{s}^*) = r_j^{-C}$.  Observe that we can write:
    \begin{gather}\label{dev_init}
        \ex{x_i^{(\beta)}(s_i^*, \mbf{s}_{-i}) - x_i^{(\beta)}(\mbf{s}^*) \;\bigg\vert\; \cal{E}}\nonumber\\
        =\mathbb{E}_{C \sim C_i^{(\beta)}(\mbf{s})}\sum_{j > 1}\sum_{\substack{k, \ell \in [m]\\x(k) > x(\ell)}}(x(k) - x(\ell))\pr{\cal{E}_{j,C}}\bigg\{\pr{(r_1^{-C}, r_j^{-C}) = (k, \ell) \mid \cal{E}_{j,C}}\nonumber \\
        \hspace{6cm}- \pr{(r_1^{-C}, r_j^{-C}) = (\ell, k) \mid \cal{E}_{j,C}}\bigg\}
    \end{gather}
    It's not hard to see that for a $(1 - \delta)$-Stochastically Consistent advice space we have for any two sample rankings $r, f$ (possibly the same, as in the \hyperref[lemma:dev/proof]{proof} of Lemma 4.3) 
    and valid indices:
    \begin{equation} \label{eq:eprime_delta}
        \pr{\bmat{r_u \\ r_{u+v}} = \bmat{k\\\ell} = \bmat{f_z\\f_{z+w}} \;\bigg\vert\; \cal{E}'} \geq \l(1 - \delta\r)^2\pr{\bmat{r_u \\ r_{u+v}} = \bmat{\ell\\k} = \bmat{f_z\\f_{z+w}} \;\bigg\vert\; \cal{E}'}
    \end{equation}

    But then we can see from (3) that (define $P^{j, C}_{k, \ell} := \pr{\cal{E}_{j,C}}\pr{(r_1^{-C}, r_j^{-C}) = (k, \ell) \mid \cal{E}_{j,C}}$ for brevity):
    \begin{align}
        &\ex{x_i^{(\beta)}(s_i^*, \mbf{s}_{-i}) - x_i^{(\beta)}(\mbf{s}^*) \;\bigg\vert\; \cal{E}}\nonumber\\
        &\hspace{1cm}\geq ((1-\delta)^2 - 1)\cdot\mathbb{E}_{C \sim C_i^{(\beta)}(\mbf{s})}\sum_{j > 1}\sum_{\substack{k, \ell \in [m]\\x(k) > x(\ell)}}(x(k) - x(\ell))P^{j, C}_{\ell, k}\\
        &\hspace{1cm}=((1-\delta)^2 - 1)\cdot\mathbb{E}_{C \sim C_i^{(\beta)}(\mbf{s})}\sum_{j > 1}\sum_{\substack{k, \ell \in [m]\\x(k) > x(\ell)}}\l\{x(k)P^{j, C}_{\ell, k} + x(\ell)P^{j, C}_{k, \ell}\r\}\nonumber\\
        &\hspace{1cm}\;\;\;\;+(1 - (1-\delta)^2)\cdot\mathbb{E}_{C \sim C_i^{(\beta)}(\mbf{s})}\sum_{j > 1}\sum_{\substack{k, \ell \in [m]\\x(k) > x(\ell)}}x(\ell)\l(P^{j, C}_{k, \ell} +P^{j, C}_{\ell, k}\r)\\
        &\hspace{1cm}\ge ((1-\delta)^2 - 1)\cdot\ex{x_i^{(\beta)}(\mbf{s}^*)\cdot\ind{x_i^{(\beta)}(s_i^*, \mbf{s}_{-i}) \neq x_i^{(\beta)}(\mbf{s}^*)} \;\bigg\vert\; \cal{E}}\\
        &\hspace{1cm}\ge ((1-\delta)^2 - 1)\ex{x_i^{(\beta)}(\mbf{s}^*) \;\bigg\vert\; \cal{E}}
    \end{align}

    To see line (7), note first that $1 - (1-\delta)^2 \ge 0$ for $\delta \in [0,1]$ and thus, the second term is non-negative. Then, observe that the term we lower bound in line (7) exactly represents the first term in the preceding line: it “counts" the expected value of the candidate hired by firm $i$ in profile $\mbf{s}$, only when this candidate is different than the one hired in $\mbf{s}$ (if she were to deviate to $s^*_i$). The final inequality follows from the fact that the factor is non-positive and candidate values are non-negative.
    
\end{proof}
\deltasc
\begin{proof}
    The proof is similar to that for Stochastically Consistent advice spaces. Using the same notation we have two cases.

    \paragraph{Case 1:} When \underline{$\ex[\beta]{\snatched} \geq \l(1 - \frac{1}{1+(1-\delta)^2}\r)\cdot SW(\mbf{s}^*)$}:\\

    Trivially then, we have that $\frac{SW(\mbf{s^*})}{SW(\mbf{s})} \le \l(1 - \frac{1}{1+(1-\delta)^2}\r)^{-1} = 1 + \frac{1}{(1 - \delta)^2}$.

    \paragraph{Case 2:} When \underline{$\ex[\beta]{\snatched} < \l(1 - \frac{1}{1+(1-\delta)^2}\r)\cdot SW(\mbf{s}^*)$}:\\

    By \Cref{lemma:delta_dev} we have that:
    \[SW(\mbf{s}) \ge (1-\delta)^2\cdot \ex[\beta]{\avail}\]

    By the case assumption we know $\ex[\beta]{\avail} \ge \frac{1}{1+(1-\delta)^2}SW(\mbf{s}^*)$ and so since $\mbf{s}$ is an equilibrium:
    \[SW(\mbf{s}) \ge (1-\delta)^2\frac{1}{1+(1-\delta)^2}\cdot SW(\mbf{s}^*)\]

    Which gives \[\frac{SW(\mbf{s}^*)}{SW(\mbf{s})} \le 1 + \frac{1}{(1 - \delta)^2}\].
\end{proof}